\def\uhat{\widehat u}
\newcommand\myprime{\mkern-0.5mu\raise0.4ex\hbox{$\scriptstyle\prime$}}
\newtheorem{theorem}{Theorem}
\newtheorem{lemma}[theorem]{Lemma}
\newtheorem{corollary}[theorem]{Corollary}
\newtheorem{claim}[theorem]{Claim}
\newtheorem{remark}[theorem]{Remark}
\newcounter{property}
\newtheorem{property}[property]{Property}
\newcommand{\ignore}[1]{}
\DeclareMathOperator\arctanh{arctanh}
\def\ltwob{{\lambda}}
\newcommand{\be}{\begin{equation}}
\newcommand{\ee}{\end{equation}}
\newcommand{\bea}{\begin{eqnarray}}
\newcommand{\eea}{\end{eqnarray}}
\newcommand{\bean}{\begin{eqnarray*}}
\newcommand{\eean}{\end{eqnarray*}}
\newcommand{\mylabel}[2]{#2\def\@currentlabel{#2}\label{#1}}
\newcommand\logderiv{{\mathcal D}}
\begin{document}

\title{Diversity in Evolutionary Dynamics}

\author{Yuval Rabani\thanks{The Rachel and Selim Benin School of Computer Science and Engineering, 
		The Hebrew University of Jerusalem, Jerusalem, Israel. yrabani@cs.huji.ac.il.
		Supported in part by ISF grants 3565-21 and 389-22, and by BSF grant 2023607.} \and
             Leonard J. Schulman\thanks{Division of Engineering and Applied Science, California Institute of Technology, Pasadena CA 91125, USA.
schulman@caltech.edu.  Supported in part by NSF grant CCF-2321079.} \and
             Alistair Sinclair\thanks{Computer Science Division, University of California, Berkeley CA 94720, USA.  sinclair@cs.berkeley.edu.
             Supported in part by NSF grant CCF-2231095.}}
             
\date{\today}

\maketitle

\begin{abstract}
We consider the dynamics imposed by natural selection on the populations
of two competing, sexually reproducing, haploid species.
In this setting, the fitness of any genome varies over time due to the changing population
mix of the competing species; crucially, this fitness variation arises naturally from the
model itself, without the need for imposing it exogenously as is typically the case.
Previous work on this model~\cite{PilSch} showed that, in the special case where each of the two
species exhibits just two phenotypes, genetic diversity is maintained at all times.
This finding supported the tenet that sexual reproduction is advantageous because it promotes
diversity, which increases the survivability of a species. 

In the present paper we consider the more realistic case where there are more than two phenotypes
available to each species. The conclusions about diversity in general turn out to be very different
from the two-phenotype case.

Our first result is negative: namely, we show that sexual reproduction
does not guarantee the maintenance of diversity at all times, i.e.,
the result of~\cite{PilSch} does \emph{not} generalize. Our counterexample
consists of two competing species with just three phenotypes each. We show that,
for any time~$t_0$ and any $\varepsilon>0$, there is a time $t\ge t_0$ at which
the combined diversity of both species is smaller than~$\varepsilon$.
Our main result is a complementary positive statement, which says that
in any non-degenerate example, diversity \emph{is} maintained in a weaker, 
``infinitely often'' sense.

Thus, our results refute the supposition that sexual reproduction ensures 
diversity at all times, but affirm a weaker assertion that extended periods of high diversity
are necessarily a recurrent event. 
\end{abstract}

\thispagestyle{empty}
\newpage
\setcounter{page}{1}

\section{Introduction}

One of the most standard models in evolutionary biology is
\emph{replicator dynamics} (see, e.g., \cite{TJ,LosertAkin83,HofSig,Sandholm}),
a non-linear dynamical system that describes the evolution of a 
species under selection and (haploid) sexual
reproduction. Individuals (genotypes) are represented as
strings of $n$ genes, each of which may take on a finite number (for simplicity 
we assume two, denoted 0 and~1) of values (alleles).  
The state of the population at time~$t$ is
described by the genotype {\it frequencies\/} $\{p_x(t)\}$, a probability distribution 
over $x\in\{0,1\}^n$.  Selection is determined by the set of real-valued {\it fitnesses\/}~$\{u(x)\}$,
which govern the likelihood of survival for each genotype~$x$ as will be specified shortly.

We focus on the important regime of {\it weak selection}, which means that reproduction
takes place at a much faster rate than selection.
In this regime, a classical result of 
Nagylaki~\cite{Nagylaki} shows that, after a short time, the dynamical system
remains very close to the so-called ``Wright manifold" of {\it product distributions\/} on
$\{0,1\}^n$; hence it suffices to keep track of only the marginal probabilities
$\{p_{i,b}(t)\}$, where for each $i\in[n]$ and $b\in\{0,1\}$, 
$p_{i,b}(t) = \sum_{x:x_i=b}p_x(t)$ is the probability that the $i$th gene
has allele~$b$.  The frequency of a given genotype~$x$ will then just be the product
$p_x(t) = \prod_{i\in[n]} p_{i,x_i}(t)$.

The replicator dynamics (in continuous time) for the marginals is specified by the system of
differential equations
\begin{equation}\label{eq:replicator1}
   \dot p_{i,b} = p_{i,b}\bigl(\uhat_{i,b}(p) - \uhat(p)\bigr),
\end{equation}
where $\uhat(p) = \sum_x p_x u(x)$ is the average fitness of the current population~$p$,
and $\uhat_{i,b}(p)$ is this same average conditioned on $x_i=b$.  Thus the quantity in
parentheses in~\eqref{eq:replicator1} is the differential fitness of having allele~$b$ in the
$i$th gene.

Chastain {\it et al.}~\cite{CLPV} 
provided an interesting
game-theoretic perspective on the replicator dynamics: namely, if one views the genes
as $n$ players in a coordination game, their allele values as actions, the genome frequencies
as mixed strategies, and common payoffs determined by the fitnesses, then \eqref{eq:replicator1}
can be viewed as the classical multiplicative weight updates
algorithm (MWU) (in continuous time), executed on the corresponding coordination game. 
However, a serious drawback of modeling evolution through these dynamics,
as was proved (for discrete time) by Mehta {\it et al.}~\cite{MPP},
is that, under minimal genericity assumptions, and except for a set of initial conditions of measure
zero, the dynamics always converges to a {\it pure\/} Nash equilibrium, i.e., a genetic monoculture.  
(This is in spite of the potential presence of uncountably many mixed Nash equilibria.)  This violates
a central tenet of evolutionary biology, which is that populations in nature
maintain diversity (see, e.g., \cite{Bartonetal}).  
In order to capture this fundamental property, the model must be enriched.

The main obstacle to diversity is the fixed set of fitnesses $\{u(x)\}$,
so one might postulate some mechanism whereby $u(x)$ varies over
time, e.g., under a changing external environment.  This approach  is taken, e.g., by
\cite{Vijayetal, MPPTV,Weitzetal,WVA}, who introduce an exogenous rule for varying fitnesses.
However, an arguably more natural approach, which does not require the
introduction of any additional assumptions, is
to incorporate the observation that fitnesses are in fact determined by 
interactions with other species, which are themselves undergoing analogous evolutionary
processes. This viewpoint, as a general evolutionary principle,
was apparently first articulated by Ehrlich and Raven~\cite{ER64}, and played a role
in the development of the so-called
``Red Queen hypothesis"~\cite{Bell,Hamilton,Hartung,Jaenike,VanValen}\footnote{The
term ``Red Queen" was first coined by Van Valen~\cite{VanValen} in his work on the
extinction of populations; the terminology refers to the Red Queen in Lewis Carroll's
{\it Alice Through the Looking Glass}, who says of Looking Glass Land: 
``Now, {\it here}, you see, it takes all the running you can do, to keep in the same place."
Later researchers, including Bell~\cite{Bell},
Hamilton~\cite{Hamilton}, Hartung~\cite{Hartung} and Jaenike~\cite{Jaenike}, advanced
the same principle as an explanation for the prevalence of sexual reproduction in nature.}
which holds that the reason for sexual reproduction is 
precisely that competition among species favors 
mixed strategies, and that sexual reproduction is a way of ensuring that the population as a whole plays a mixed strategy.

This leads to the following generalization of the replicator dynamics proposed
in the evolutionary context by~\cite{PilSch}. For simplicity we 
consider just two species, with $n$ and $m$ genes respectively.  Game-theoretically, we think
of two teams, $A$ and~$B$, with $|A|=n$ and $|B|=m$.  The players within each team play
a coordination game as before, but the two teams between them play a competitive, zero-sum game. 
Thus for each choice of actions $x\in\{0,1\}^n$, $y\in\{0,1\}^m$ by the
$m+n$ players, all players in team~$A$ receive common payoff $u(x,y)$ and all players in
team~$B$ receive $-u(x,y)$. In contrast to~\eqref{eq:replicator1}, $u$ is now a $2^n \times 2^m$ matrix. 

Writing $p=(p_1,\ldots,p_n)$ and $q=(q_1,\ldots,q_m)$ for the (time dependent) mixed strategies of teams~$A,B$ 
respectively, the replicator dynamics under weak selection becomes
\begin{equation}\label{eq:replicator2}
\begin{array}{r@{}l}
   \dot p_{i,b} &{}= p_{i,b}\bigl(\uhat_{i,b}(p,q) - \uhat(p,q)\bigr) \\[5pt]
   \dot q_{j,b} &{}= q_{j,b}\bigl(-\uhat_{j,b}(p,q) + \uhat(p,q)\bigr)
\end{array}
\end{equation}
where $\uhat(p,q)=\sum_{x,y}p_xq_yu(x,y)$ is the average payoff to team~$A$ 
under the current strategies $p,q$, $\uhat_{i,b}(p,q)$ is this same average
conditioned on player $i$ of team~$A$ choosing allele~$b$,
and $\uhat_{j,b}(p,q)$ is this same average
conditioned on player $j$ of team~$B$ choosing allele~$b$. (We consistently distinguish between the teams by the $i,j$ subscripts.)
 In light of the discussion in the previous paragraph,
we call~(\ref{eq:replicator2}) the \emph{Red Queen Dynamics}.

Note that, superficially, the resulting $(m+n)$-player game resembles a two-player zero-sum
game between players~$A$ and~$B$. The crucial distinction is that here
teams $A$ and~$B$ are restricted to mixed strategies that are \emph{product distributions} over
their individual gene choices (i.e., the individual members of each team play independently,
rather than in coordinated fashion as in the two-player game).  
Thus, for example, the two-player
game may possess mixed equilibria that are not even available in the $(m+n)$-player game,
while an equilibrium in the latter may not be an equilibrium in the former because
of the larger class of allowed defections in the two-player game.  
Indeed, it is not hard to find
examples of two-team zero-sum games for which the classical von Neumann minimax theorem
for two-person games fails, and the team that picks its strategy second has an advantage~\cite{vSK}.
This phenomenon was  investigated systematically
by Schulman and Vazirani~\cite{SchulVaz}, 
who gave tight bounds on the so-called {\it duality gap}, which quantifies the cost to the teams of having to play statistically-independent (i.e., product) strategies vs.\ being able to use a coordinated strategy.

Since we have simplified to the case of only two alleles per gene, we can abbreviate by setting $p_i=p_{i,0},q_j=q_{j,0}$, and the dynamics~\eqref{eq:replicator2} can be rewritten
\begin{equation}\label{eq:replicator3}
\begin{array}{r@{}l}
   \dot p_{i} &{}= p_{i}\bigl(\uhat_{i,0}(p,q) - \uhat(p,q)\bigr)  \\[5pt]
  \dot q_{j} &{}= q_{j}\bigl(-\uhat_{j,0}(p,q) + \uhat(p,q)\bigr)
\end{array}
\end{equation}

Recall that our primary motivation for considering the Red Queen dynamics was to exhibit a
plausible model that guarantees the maintenance of diversity over time. We measure
diversity in terms of the \emph{entropy} of the species populations given by $p(t)=(p_1(t),\ldots,p_n(t)),q(t)=(q_1(t),\ldots,q_m(t))$. 
We write $(p,q)\in (0,1)^{n+m}$ to indicate that every $p_i$ and $q_j$ are in the open interval $(0,1)$; i.e., all genotypes are represented in the population.

Ideally one would
like to claim that the following diversity property holds:
\begin{property} For any initial populations $(p(0),q(0))\in (0,1)^{n+m}$ (i.e., all genotypes are initially represented), 
\bea \label{eq:entcond1} \exists
\varepsilon>0, t_0\geq 0 \text{ s.t.\ } \forall t\geq t_0, \quad
H(p(t))+H(q(t)) \ge \varepsilon,
\eea
where $H(p)=\sum_i H(p_i)$ and $H(p_i)=- p_i\log p_i-(1-p_i)\log(1-p_i)$ is Shannon entropy
(and $H(q)$ is defined analogously).  In other words, at all times
the two-species system maintains diversity (not necessarily within a particular species).
\label{propPapaBear} \end{property}

This property is known to hold in some very restricted cases: in single-gene species
(i.e., $n=m=1$~\cite{PilSham}) and, for general $n,m$, in
``Boolean phenotype" games~\cite{PilSch}, those games in which
the payoff $u(x,y)$ is allowed to depend only on Boolean functions of~$x$ and~$y$.

This paper is devoted to exploring in some detail the extent to which Property~\ref{propPapaBear},
or similar diversity statements, hold in more general dynamics, based on the nature of the game-theoretic
equilibria of the two-team game. 

If the two-team game has a strict pure Nash equilibrium, then (as we will show later in Theorem~\ref{thm:main}\ref{2iii})
the dynamics initialized in a neighborhood of that point will converge to it.  Therefore, the interesting case is when
strict pure Nash equilibria do not exist.  Our first result shows that even in this case---and indeed,
even in the absence of {\it weak\/} pure equilibria---there are examples for which Property~\ref{propPapaBear}
fails.  Indeed, in these examples the maintenance of entropy expressed in Equation~\eqref{eq:entcond1} fails to hold for all but
a measure-zero set of  initial populations.

\begin{theorem}\label{thm:strongfail}
There exist Red Queen dynamics with no weak pure Nash equilibrium for
which Property~\ref{propPapaBear} fails.   Even stronger, inequality~\eqref{eq:entcond1} holds
only on a set of initial conditions $(p(0),q(0))\in (0,1)^{n+m}$ of measure zero.
\end{theorem}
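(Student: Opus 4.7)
The plan is to exhibit a single concrete Red Queen instance that realizes three properties at once --- no weak pure Nash equilibrium, a unique interior fixed point, and every other orbit accumulating on the boundary of the state cube --- and then read off both the entropy failure and the measure-zero statement as corollaries. I would take $n=m=2$, so that each team has four genotypes $\{0,1\}^{2}$, grouped by the payoff structure into three ``active'' phenotypes arranged in a rock--paper--scissors cycle, with the fourth genotype strictly dominated by the RPS barycenter. Concretely, write down an explicit $4\times 4$ matrix $u(x,y)$ whose restriction to the three active rows/columns is the standard skew-symmetric RPS matrix, shifted so that the mean payoff at the barycenter vanishes. A brute-force inspection of the sixteen vertices of $\{0,1\}^{2+2}$ verifies that at every vertex some gene of some team has a strictly profitable single-bit flip, so the game has no weak pure Nash equilibrium; solving the marginal balance equations $\uhat_{i,0}(p,q)=\uhat(p,q)$ and the analogous $\uhat_{j,0}$-equations, the RPS symmetry forces a unique interior fixed point $(p^{*},q^{*})$, which by construction is the barycenter.

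The analytic crux is to show that every orbit except $(p^{*},q^{*})$ itself accumulates on $\partial[0,1]^{2+2}$. I would pass to log-odds coordinates $\pi_i=\log(p_i/(1-p_i))$, $\rho_j=\log(q_j/(1-q_j))$, in which the replicator equations~\eqref{eq:replicator3} become affine in the conditional payoffs, and propose the sum-of-KL-divergences Lyapunov functional
\begin{equation}
\Phi(p,q) \;=\; \sum_{i=1}^{2} D(p^{*}_i \Vert p_i) + \sum_{j=1}^{2} D(q^{*}_j \Vert q_j).
\end{equation}
In the classical full-information two-player zero-sum replicator this $\Phi$ would be exactly conserved along orbits; in the team (product-distribution) setting, however, the Wright-manifold projection breaks the cancellation, and I expect a direct computation to give $\dot{\Phi}>0$ whenever $(p,q)\neq (p^{*},q^{*})$. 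Since $\Phi$ blows up as any coordinate approaches $0$ or $1$, strict monotonicity forces $\Phi(p(t),q(t))\to\infty$ along every non-equilibrium orbit, equivalently $\min_i\min(p_i,1-p_i)\cdot \min_j\min(q_j,1-q_j)\to 0$.

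To convert boundary approach into entropy collapse, note that because no vertex or edge of $[0,1]^{2+2}$ is a weak pure Nash equilibrium of the game restricted to the corresponding face, the vector field is outward-pointing on every positive-dimensional invariant boundary face; the orbit therefore cannot be absorbed on any such face but must revisit arbitrarily small neighborhoods of vertices infinitely often. At a vertex $H(p)+H(q)=0$, yielding $\liminf_{t\to\infty}[H(p(t))+H(q(t))]=0$ for every non-equilibrium orbit, hence the failure of~\eqref{eq:entcond1} for every $\varepsilon>0$. The exceptional set on which~\eqref{eq:entcond1} could still hold is thus contained in the single-point orbit $\{(p^{*},q^{*})\}$, which is of measure zero in $(0,1)^{2+2}$, proving the strengthened form of Property~\ref{propPapaBear}'s failure.

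The main obstacle I anticipate is establishing the global positivity of $\dot{\Phi}$. A local linearization at $(p^{*},q^{*})$ already yields an unstable spiral, because the cross-terms that exactly cancel in the single-population zero-sum replicator no longer cancel under the product-distribution restriction; but promoting this into a globally signed time derivative requires a delicate algebraic identity matching the RPS payoff symmetries against the cubic replicator vector field. If a closed-form global Lyapunov proves intractable, a backup route is a Poincar\'e--Bendixson-type argument in four dimensions: combine the linear instability of $(p^{*},q^{*})$ with a Bendixson-type divergence computation in log-odds coordinates (where the replicator field is affine in the payoffs) to rule out interior limit cycles, and then use the absence of weak pure Nash attractors on each boundary face to conclude that every $\omega$-limit set of every non-equilibrium orbit must contain a vertex.
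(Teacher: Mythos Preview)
Your high-level strategy --- exhibit an $n=m=2$ game with a unique interior equilibrium and no pure Nash, then show orbits spiral outward to the corners --- matches the paper's. But the proof has a genuine gap at its analytic core.

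The KL-based functional $\Phi$ you propose is essentially equivalent to the paper's potential $\pi=\prod_i(1-\delta_i^2)\prod_j(1-\epsilon_j^2)$ (for a uniform equilibrium, $\Phi = \text{const} - \tfrac12\log\pi$). The paper computes $\logderiv\pi = -(\delta_1\delta_2+\epsilon_1\epsilon_2)$, so $\dot\Phi = \tfrac12(\delta_1\delta_2+\epsilon_1\epsilon_2)$, which is \emph{not} globally positive: it is negative, for instance, whenever $\delta_1,\delta_2$ have opposite signs and $\epsilon_1,\epsilon_2$ have opposite signs. So the hoped-for strict monotonicity fails; the ``Wright-manifold correction'' you invoke does not have a definite sign. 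There is no reason to expect your RPS-based example to behave better in this respect, and you give no mechanism that would enforce it. Your fallback, a Poincar\'e--Bendixson-type argument, is not available in four dimensions, and a divergence computation alone cannot rule out more complicated recurrent sets.

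The paper resolves this by a two-stage argument you are missing. First, using a \emph{different} pair of potentials $\mu_A=(\delta_1-\delta_2)^2/[(1-\delta_1^2)(1-\delta_2^2)]$ and $\mu_B$ (analogous for $\epsilon$), it shows that every trajectory outside a measure-zero ``skew-symmetric'' plane $\check S$ is attracted to the two-dimensional invariant ``symmetric'' plane $S=\{\delta_1=\delta_2,\ \epsilon_1=\epsilon_2\}$. Only on $S$ does $\dot\Phi>0$ hold (there it reduces to $\tfrac12(\delta^2+\epsilon^2)$), and the paper then runs a careful two-dimensional phase-plane analysis on $S$ to show the orbit cycles through neighborhoods of all four corners. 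Two further points: (a) the exceptional set of initial conditions satisfying~\eqref{eq:entcond1} is not just the single equilibrium point but the entire plane~$\check S$, so your final measure-zero claim is too optimistic as stated; (b) the inference ``$\Phi\to\infty$ $\Rightarrow$ some coordinate approaches $0$ or $1$ $\Rightarrow$ the orbit visits vertex neighborhoods'' is not automatic --- approaching the boundary of the cube does not by itself force simultaneous approach of all coordinates to $\{0,1\}$, and the paper spends real effort (the cycling argument through the four ``arms'') to establish this.
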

Our concrete example used in the proof of Theorem~\ref{thm:strongfail} consists of species with
$n=m=2$, i.e., each having four genotypes.  However, the payoff function causes two of these
genotypes in each species to behave identically, so the number of phenotypes in each case is only three.
Interestingly, this is in sharp contrast to the result of~\cite{PilSch} mentioned above, which shows that
with only two phenotypes per species diversity is always maintained.

Theorem~\ref{thm:strongfail} raises the possibility that diversity might vanish even in a competitive co-evolution (Red Queen) dynamics. 
However, we show that this is false.  Complementing Theorem~\ref{thm:strongfail}, in our main result we show that, under modest game-theoretic
conditions, an entropy lower bound statement (weaker, necessarily, than Property~\ref{propPapaBear}) does in fact hold. This weaker statement is the
following:
\begin{property}
For any initial populations $(p(0),q(0))\in (0,1)^{n+m}$, 
\begin{equation}\label{eq:entcond2} \exists \varepsilon>0 \text{ s.t.\ }
   \int_0^\infty \max\{0, H(p(t)) + H(q(t)) - \varepsilon\} \ dt = \infty.
\end{equation}
In other words, during an infinite span of time the entropy is uniformly bounded away from~$0$.
\label{propMamaBear} \end{property}

Theorem~\ref{thm:main}\ref{2i}, below, shows
that Property~\ref{propMamaBear} holds for all Red Queen dynamics, 
subject only to the condition that no pure Nash equilibrium exists.
Moreover, in Theorem~\ref{thm:main}\ref{2ii} we show that, under the even weaker condition that
no \emph{strict} pure Nash equilibrium exists, 
a slightly weaker diversity property is guaranteed to hold, namely:
\begin{property} 
For any initial populations $(p(0),q(0))\in (0,1)^{n+m}$, 
\be \int_0^\infty (H(p(t)) + H(q(t))) \ dt = \infty. \label{eq:entcond3} \ee
\label{propBabyBear} \end{property} 
Finally, we complete the picture by proving in Theorem~\ref{thm:main}\ref{2iii} the complementary result 
that, in the presence of a strict pure equilibrium,
even this weaker property fails on a set of initial conditions of positive measure. 

These general results are gathered in the following theorem.

\begin{theorem}\label{thm:main}
The following results hold for general Red Queen dynamics.
\begin{enumerate}[label=\textnormal{(\roman*)}] 
\item[\mylabel{2i}{(i)}]  If the dynamics has no pure Nash equilibrium, 
then Property~\ref{propMamaBear} holds.
\item[\mylabel{2ii}{(ii)}]  
If the assumption in (i) is weakened to assume only that the 
dynamics has no \emph{strict} pure Nash equilibrium,
then Property~\ref{propBabyBear} holds.
\item[\mylabel{2iii}{(iii)}]  
If the dynamics has a strict pure Nash equilibrium, then Property~\ref{propBabyBear} (and therefore also Property~\ref{propMamaBear}) 
fails on a set of initial populations $(p(0),q(0))\in(0,1)^{n+m}$ of positive measure.
\end{enumerate}
\end{theorem}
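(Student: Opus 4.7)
The proposed approach is to work in log-odds coordinates $\pi_i=\log(p_i/(1-p_i))$ and $\sigma_j=\log(q_j/(1-q_j))$, in which the dynamics \eqref{eq:replicator3} simplify to $\dot\pi_i=\uhat_{i,0}(p,q)-\uhat_{i,1}(p,q)$ and $\dot\sigma_j=-(\uhat_{j,0}(p,q)-\uhat_{j,1}(p,q))$, with right-hand sides that extend continuously to the closed hypercube $[0,1]^{n+m}$. At a vertex $v$, each right-hand side equals the corresponding deviation payoff at $v$, so the Nash/strict status of $v$ is directly encoded in the signs and vanishings of the linearized flow there.

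For part~\ref{2iii}, at a strict pure Nash vertex $v^\ast$ every log-odds drift at $v^\ast$ has a definite sign pushing the corresponding $\pi_i$ or $\sigma_j$ toward $+\infty$ or $-\infty$ in accordance with $v^\ast$. Thus $v^\ast$ is a linearly stable equilibrium in log-odds, so standard Lyapunov theory yields an open neighborhood $U\ni v^\ast$ in $(0,1)^{n+m}$ from which every trajectory converges to $v^\ast$ exponentially, giving $H(p(t))+H(q(t))=O(e^{-\lambda t})$ and $\int_0^\infty H\,dt<\infty$. Property~\ref{propBabyBear} (and hence Property~\ref{propMamaBear}) therefore fails on the positive-measure set $U$.

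Parts~\ref{2i} and~\ref{2ii} share a common setup and both proceed by contradiction. Assume the relevant conclusion fails, i.e., either $\int\max(0,H-\varepsilon)\,dt<\infty$ for every $\varepsilon>0$ (in~\ref{2i}) or $\int H\,dt<\infty$ (in~\ref{2ii}); in either case a short argument using uniform continuity of $H$ (which is $C^1$ with bounded derivative on $[0,1]^{n+m}$) forces $H(t)\to 0$. This in turn forces each $p_i(t),q_j(t)\to\{0,1\}$, so the $\omega$-limit of the trajectory lies in the vertex set $V$, and by connectedness of $\omega$-limits for continuous flows on compact connected spaces combined with discreteness of $V$, the trajectory converges to a single vertex $v$. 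Under the hypothesis of~\ref{2i} that $v$ is not a Nash equilibrium, some coordinate $(i_0,b_0)$ satisfies $|\dot\pi_{i_0}|\ge c>0$ (with sign pointing away from $v$) throughout a fixed neighborhood of $v$ (uniformly, by continuity and compactness), immediately contradicting convergence to $v$. This proves Property~\ref{propMamaBear}.

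For~\ref{2ii} one case remains: $v$ is weak but not strict Nash, so some coordinate $i_0$ is indifferent ($\uhat_{i_0,0}(v)=\uhat_{i_0,1}(v)$), while other coordinates may be strictly preferred. In the strict directions the log-odds grow linearly, so the corresponding coordinates approach their vertex values exponentially; substituting these exponentially small deviations into the first-order expansion of $\dot\pi_{i_0}$ about $v$ (where it vanishes) gives $\dot\pi_{i_0}(t)=O(e^{-\lambda t})$. Hence $\pi_{i_0}(t)$ converges to a finite limit, so $p_{i_0}(t)$ converges to a value strictly in $(0,1)$, contradicting $H(p_{i_0})\to 0$. The main obstacle is when several coordinates are simultaneously indifferent at $v$: then $\dot\pi_{i_0}$ receives contributions from other indifferent coordinates whose log-odds drift also vanishes at $v$ and which need not decay exponentially, breaking the substitution argument. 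To close this case we plan a center-manifold analysis at $v$, exploiting the potential-game structure of each team's restriction of \eqref{eq:replicator3} (team~A ascending $\uhat$, team~B descending) together with the combinatorial fact that non-strictness of $v$ means some payoff-preserving deviation exists, to rule out trajectories approaching $v$ tangent to the center subspace. Making this rigorous across every possible degeneracy of $v$ is where the bulk of the technical work lies.
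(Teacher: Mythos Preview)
Your arguments for parts~\ref{2i} and~\ref{2iii} are sound. For~\ref{2iii} you essentially reproduce the paper's proof. For~\ref{2i} you take a genuinely different route: the paper argues constructively, partitioning time into epochs and showing that the trajectory must repeatedly spend intervals of uniform length at uniform distance from every corner (Lemma~\ref{lm:min far duration} and Claim~\ref{cl:max near duration}), each such interval contributing a fixed quantum to the integral in Property~\ref{propMamaBear}. Your contradiction argument via connectedness of the $\omega$-limit set is cleaner and avoids the epoch bookkeeping; the paper's version, on the other hand, yields explicit quantitative bounds and recycles its machinery directly in part~\ref{2ii}.

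For part~\ref{2ii} there is a genuine gap, and your proposed closure targets the wrong obstruction. You correctly isolate the difficult case---a weak Nash vertex $v$ with several simultaneously indifferent coordinates---but your plan is to use a center-manifold argument to ``rule out trajectories approaching $v$ tangent to the center subspace.'' This is problematic for two reasons. First, it is not clear that such convergence can be excluded in general: with two indifferent coordinates the leading-order reduced dynamics near $v$ takes the form $\dot\delta_0\approx -a\,\delta_0\delta_1$, $\dot\delta_1\approx -b\,\delta_0\delta_1$, which for suitable signs of $a,b$ \emph{does} drive $(\delta_0,\delta_1)\to(0,0)$ (at rate $1/t$). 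Second, even if you could rule out convergence you would be proving Property~\ref{propMamaBear} rather than Property~\ref{propBabyBear}, and the paper only asserts the latter under the no-strict-Nash hypothesis; so either you are attempting a strictly stronger theorem, or the strategy is misaligned with the goal.

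The mechanism that actually closes the gap is a rate bound, and it is already latent in your single-indifferent argument. For every indifferent $i\in I_0$ the drift $\dot\pi_i=\uhat_{i,0}-\uhat_{i,1}$ vanishes at $v$ and is multilinear in the \emph{other} coordinates, so once near $v$ one has $|\dot\pi_i|\le C\gamma_{\max}$, where $\gamma_{\max}(t)$ is the $\ell_\infty$ distance to $v$. Equivalently $\dot p_{i,\overline{x_i}}\ge -C_2\gamma_{\max}^2$; this is precisely the paper's key estimate~\eqref{eq:new convergence to weak Nash}. After the strict coordinates have dropped below the indifferent ones (which, as you note, happens exponentially fast), $\gamma_{\max}$ is realized by an indifferent coordinate, whence $\dot\gamma_{\max}\ge -C_2\gamma_{\max}^2$ and therefore $\gamma_{\max}(t)\ge c/t$. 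Then $H(p(t))+H(q(t))\ge c'(\log t)/t$ and the integral diverges, contradicting $\int H\,dt<\infty$. No center-manifold analysis or potential-game structure is needed; the whole thing drops into your contradiction framework in a few lines.
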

 
\par\medskip\noindent
{\bf Interpretation of Results:} Theorem~\ref{thm:strongfail} refutes the supposition that sexual reproduction (as expressed in
the weak selection model) ensures diversity of populations at all times, by exhibiting a concrete
example in which entropy becomes arbitrarily small infinitely often.  On the other hand, Theorem~\ref{thm:main}
affirms the weaker assertion that, even in games satisfying the hypotheses of Theorem~\ref{thm:strongfail},
extended periods of high diversity recur over an unbounded sequence of time intervals.
In particular, our analysis in Theorems~\ref{thm:strongfail} and~\ref{thm:main} demonstrates
the phenomenon that two competing populations may experience recurring periods during which they simultaneously
approach genetic monocultures, despite the fact that one limit genome is poorly fit against the competitor's 
limit genome. However, when this happens the population dynamics eventually causes the diversity of the poorly
fit species to increase for a significant amount of time, during which it escapes the poorly fit genome.
\par\medskip\noindent
{\bf Organization of the Paper:} The remainder of the paper is organized as follows.  We begin in Section~\ref{sec:prelims} by deriving
several alternative formulations of the Red Queen dynamics that will be useful in our subsequent analysis.
Section~\ref{sec:thm1proof} is devoted to a proof of Theorem~\ref{thm:strongfail}; it proceeds by
constructing an explicit example dynamics for which entropy is not maintained over time.  In Section~\ref{sec:thm2proof}
we prove our main result, showing that a weaker form of entropy maintenance \emph{does} hold for general Red 
Queen dynamics under mild assumptions about the existence of Nash equilibria.  We conclude with a discussion
of our results and some open problems in Section~\ref{sec:discussion}.

\section{Preliminaries}\label{sec:prelims}
First we reformulate the dynamics from the introduction.
For $(x,y)\in\{0,1\}^{n+m}$, $(p,q)\in (0,1)^{n+m}$, define
\begin{equation}\label{eqn:KAdef}
K_{A,i}(x,y,p,q) := \prod_{i'\ne i} p_{i',x_{i'}} \prod_j q_{j,y_j}\,;
\end{equation}
this is the probability under the product distribution $(p,q)$ of the pair of genomes 
$(x_{-i},y)$, where $x_{-i}$ is the genome $x$~excluding $x_i$.  Analogously, define 
\begin{equation}\label{eqn:KBdef}
K_{B,j}(x,y,p,q) := \prod_{i} p_{i,x_i} \prod_{j'\ne j} q_{j',x_{j'}}\,. 
\end{equation} 
We may now rewrite the replicator dynamics~\eqref{eq:replicator2} as follows: 
\begin{eqnarray}
\dot p_{i,b} & = & p_{i,b}\bigl(\uhat_{i,b}(p,q) - \uhat(p,q)\bigr) \nonumber \\
& = & p_{i,b}\sum_{\substack{x\in\{0,1\}^n \\ x_i=b}} \sum_{y\in\{0,1\}^m} \left(K_{A,i}(x,y,p,q)\cdot u(x,y) - 
   K_{A,i}(x,y,p,q) \bigl(p_{i,b}\cdot u(x,y) + p_{i,\bar{b}}\cdot u(\bar x^i,y)\bigr)\right)\nonumber \\
& = & p_{i,b}(1 -  p_{i,b})\cdot \sum_{\substack{x\in\{0,1\}^n \\ x_i=b}} \sum_{y\in\{0,1\}^m} K_{A,i}(x,y,p,q) \bigl( u(x,y) - u(\bar x^i,y)\bigr) \label{eq:replicator4pprelims}\\
& = & p_{i,b}(1 -  p_{i,b}) \bigl( \uhat_{i,b}(p,q) - \uhat_{i,\bar b}(p,q)\bigr), \label{eq:replicator4p2prelims}
\end{eqnarray}
where $\bar{b}=1-b$ denotes the complement of $b\in\{0,1\}$, and $\bar x^i$ denotes the vector $x\in\{0,1\}^n$ with
its $i$th coordinate replaced by its complement.

Similarly, we have 
\begin{eqnarray}\label{eq:replicator4q}
\dot q_{j,b} &=& -q_{j,b}(1 -  q_{j,b})\cdot \sum_{x\in\{0,1\}^n}\sum_{\substack{y\in\{0,1\}^m \\ y_j=b}}  K_{B,j}(x,y,p,q) \bigl(u(x,y)-u(x,\bar y^j)\bigr) \label{eq:replicator4qprelims} \\
                    &=& -q_{j,b}(1 -  q_{j,b}) \bigl(\uhat_{j,b}(p,q)-\uhat_{j,\bar b}(p,q)\bigr).  \label{eq:replicator4q2prelims} 
\end{eqnarray}
We note that, since the differences in utilities on the right-hand sides of~\eqref{eq:replicator4p2prelims} and~\eqref{eq:replicator4q2prelims}
are uniformly bounded, any trajectory initiated at an interior point $(p(0),q(0))\in (0,1)^{n+m}$ remains in the interior (i.e., no
value $p_{i,b}(t)$ or $q_{j,b}(t)$ becomes zero) at all finite times~$t$.

Following the simplified notation in~\eqref{eq:replicator3}, where we replace $p_{i,0}$ by~$p_i$ and $q_{j,0}$ by~$q_j$,
we get the following simplified form of the dynamics in Equations~\eqref{eq:replicator4p2prelims} and~\eqref{eq:replicator4q2prelims}:
\begin{equation}\label{eq:replicator4}
\begin{array}{r@{}l}
   \dot p_{i} &{} = p_{i}(1-p_i) \bigl(\uhat_{i,0}(p,q) - \uhat_{i,1}(p,q)\bigr); \\[5pt] 
   \dot q_{j} &{} = -q_{j}(1-q_j) \bigl(\uhat_{j,0}(p,q) - \uhat_{j,1}(p,q)\bigr).
\end{array}
\end{equation}

For our proof of Theorem~\ref{thm:strongfail}, it will be convenient to rewrite these dynamics in terms of $\delta\in [-1,+1]^n$
and $\epsilon\in [-1,+1]^m$, linear transforms of $p$ and $q$, defined by
$\delta_i= 2p_{i} - 1$, $\epsilon_j = 2q_{j} - 1$. Transform the
payoff matrix $u$ to the Fourier basis as follows. For $r\in\{0,1\}^n$ and $s\in\{0,1\}^m$
write
$$
C_{r,s} = 2^{-n-m} \sum_{x\in\{0,1\}^n}\sum_{y\in\{0,1\}^m} (-1)^{r\cdot x+ s\cdot y} u(x,y).
$$
Observe that 
\begin{align*} K_{A,i}(x,y,p,q) &= 2^{-m-n+1} \Bigl(\prod_{i'\neq i} (1+\delta_i)^{1-x_i}(1-\delta_i)^{x_i}\Bigr)\Bigl(\prod_j (1+\epsilon_j)^{1-y_j}(1-\epsilon_j)^{y_j}\Bigr) \\
&= 2^{-m-n+1} \Bigl(\prod_{i'\neq i} (1+(-1)^{x_i}\delta_i)\Bigr)\Bigl(\prod_j (1+(-1)^{y_j}\epsilon_j)\Bigr).
\end{align*}
Expanding~\eqref{eq:replicator4}, we get
\begin{align*} \dot \delta_i &= 2 \dot p_i = \frac12 (1-\delta_i^2) 
\sum_{\substack{x\in\{0,1\}^n \\ x_i=0}} \sum_{y\in\{0,1\}^m} \bigl(u(x,y)-u(\bar x^i,y)\bigr) K_{A,i}(x,y,p,q) \\
&= 2^{-m-n} (1-\delta_i^2) \sum_{\substack{x\in\{0,1\}^n \\ x_i=0}} \sum_{y\in\{0,1\}^m} \bigl(u(x,y)-u(\bar x^i,y)\bigr)
\Bigl(\prod_{i'\neq i} (1+(-1)^{x_{i'}}\delta_{i'})\Bigr)\Bigl(\prod_j (1+(-1)^{y_j}\epsilon_j)\Bigr) \end{align*}
Letting $\delta^r=\prod_{k=1}^n \delta_k^{r_k}$ and $\epsilon^s=\prod_{k=1}^m \epsilon_k^{s_k}$, this becomes
\begin{align} \dot \delta_i 
&= 2^{-m-n} (1-\delta_i^2) \sum_{\substack{x\in\{0,1\}^n \\ x_i=0}} \sum_{y\in\{0,1\}^m} \bigl(u(x,y)-u(\bar x^i,y)\bigr) 
\sum_{\substack{r\in\{0,1\}^n \\ r_i=0}} \sum_{s\in\{0,1\}^m} (-1)^{r \cdot x+s \cdot y} \, \delta^r \epsilon^s \nonumber \\
 &= 2^{-m-n} (1-\delta_i^2) 
\sum_{\substack{r\in\{0,1\}^n \\ r_i=0}} \sum_{s\in\{0,1\}^m} 
\delta^r \epsilon^s
\sum_{\substack{x\in\{0,1\}^n \\ x_i=0}} \sum_{y\in\{0,1\}^m} \bigl(u(x,y)-u(\bar x^i,y)\bigr) 
(-1)^{r\cdot x + s \cdot y} \nonumber \\
&=(1-\delta_i^2) \sum_{\substack{r\in\{0,1\}^n \\ r_i=0}} \sum_{s\in\{0,1\}^m} 
C_{\bar r^i,s}\, \delta^r \epsilon^s. \label{eq: delta dot}
\end{align}
By an analogous calculation,
\begin{align}
 \dot{\epsilon}_j
& = -(1-\epsilon_j^2)
\sum_{r\in\{0,1\}^n}  \sum_{\substack{s\in\{0,1\}^m \\ s_j=0}} 
         C_{r,\bar s^j} \delta^r \epsilon^s. \label{eq: epsilon dot}
\end{align}

Finally, there is a third interesting representation for the dynamics. For $i\in[n]$ let $\alpha_i=\arctanh \delta_i$ and $\beta_j=\arctanh \epsilon_j$. Then~\eqref{eq: delta dot},~\eqref{eq: epsilon dot} become
\begin{align}
\dot \alpha_i &=  \sum_{\substack{r\in\{0,1\}^n \\ r_i = 0}} 
\sum_{s\in\{0,1\}^m}
C_{\bar r^i,s} \Bigl(\prod_{i'\ne i} (\tanh \alpha_{i'})^{r_{i'}}\Bigr)
\Bigl(\prod_j (\tanh \beta_j)^{s_j}\Bigr)
\label{eq: alpha dot}\\
\dot \beta_j & = - \sum_{r\in\{0,1\}^n}  \sum_{\substack{s\in\{0,1\} \\ s_j = 0}}
         C_{r,\bar s^j} \Bigl(\prod_i (\tanh \alpha_i)^{r_i}\Bigr)
\Bigl(\prod_{j'\ne j} (\tanh \beta_{j'})^{s_{j'}}\Bigr) \label{eq: beta dot}
\end{align}
We will not need this representation in our proofs, but it has the advantage that the $(1-\delta^2)$, $(1-\epsilon^2)$
terms that damp progress near the boundaries of the cube are not present in this representation. 
Because of those damping terms, it is difficult in simulations to see any progress at all (in the $(\delta,\epsilon)$ representation)
when the trajectory is near a corner; and, depending on the details of the game, the trajectory might spend most of its time near corners. 
By contrast, in the $(\alpha,\beta)$ representation the corner regions are effectively expanded, and the trajectory
there is illuminated in simulations; see Fig.~\ref{fig:simulation}.

\section{Proof of Theorem~\ref{thm:strongfail}} \label{sec:thm1proof}

\subsection{The exemplifying game}

We give a specific example of Red Queen dynamics with the claimed behavior.  
The dynamics is defined by the following payoff matrix $u$, with $n=m=2$:
\begin{equation}\label{nicequadgame}
\left[\begin{array}{rrrr} 
 1 & z & z & -1 \\ 
-z & 0 & 0 & -z \\ 
-z & 0 & 0 & -z \\ 
-1 & z & z &  1 
\end{array}\right].
\end{equation}
Here the rows are indexed by strategies for Team~A, and the columns by strategies for Team~B.
Row and column indices, reading top to bottom and left to right respectively, are $00$, $01$, $10$, $11$.
Note that, since the second and third rows (respectively, columns) of the matrix are identical, each of
the two species~A and~B has four genotypes but only three phenotypes.

For definiteness we set $z = \frac 1 2$ (but the analysis below in fact holds for any $z\in (0,1)$).
It is easy to verify that in this case the representation of~$u$ in the Fourier basis is given
by the Fourier coefficients:
\begin{align*}
& C_{00,00} = C_{00,01} = C_{00,10} = C_{01,00} = C_{10,00} = C_{11,10} = C_{11,01} = C_{10,11} = C_{01,11} = C_{11,11} = 0,\\
& C_{01,01} = C_{01,10} = C_{10,01} = C_{10,10} = C_{11,00} = -C_{00,11} = \frac 1 4.
\end{align*}
Thus, plugging these values into Equations~\eqref{eq: delta dot} and~\eqref{eq: epsilon dot}, we
get the following dynamics in the variables $\epsilon_i(t),\delta_i(t)$, for $i\in\{1,2\}$:
\begin{eqnarray}
\dot{\delta}_i & = & \frac 1 4 (1 - \delta_i^2) (\delta_{3-i} + \epsilon_1 + \epsilon_2), \label{eqn:deltadot}\\
\dot{\epsilon}_i & = & -\frac 1 4 (1 - \epsilon_i^2)(\delta_1 + \delta_2 - \epsilon_{3-i}).  \label{eqn:epsdot}
\end{eqnarray}

We note first that it is easy to verify by a simple case analysis that this four-gene game possesses a 
unique Nash equilibrium, namely where all players play a mixed strategy 
(the uniform distribution on $\{0,1\}$, which corresponds to the origin $\delta_1=\delta_2=\epsilon_1=\epsilon_2=0$
in the above parameterization).\footnote{Interestingly, the von Neumann equilibrium of the 
underlying zero-sum game on genomes is different: both players play a uniform distribution on
$\{00,11\}$. Of course, this strategy does not lie on the four-player Wright manifold.}
Thus no pure Nash equilibrium exists.  Other than the origin, the only other fixed points of the dynamics
are the 16 ``corners" $\delta_i,\epsilon_i\in\{\pm 1\}$ (corresponding to pure strategies for all players);
these, however, are not Nash equilibria of our game.

It will be useful also to note that the dynamics~\eqref{eqn:deltadot}-\eqref{eqn:epsdot} commute with the 
following symmetry under the cyclic group~$C_4$:
\begin{equation*}
\delta_i' :=\epsilon_i\,;\qquad
\epsilon_i' := - \delta_i\,,
\end{equation*}
as can be verified from the form of the time derivatives:
\begin{align*} \dot{\delta_i'}&=\dot \epsilon_i = -\frac14 (1-\delta_i'^2)(-\epsilon_1'-\epsilon_2'-\delta_{3-i}') \\
\dot{\epsilon_i'} &= -\dot \delta_i = - \frac14 (1-\epsilon_i'^2)(\delta_1'+\delta_2'-\epsilon_{3-i}') \end{align*}

\begin{remark}
The above game is of a type that has been used several times in the literature. The general prescription
is that the action~$x$ of Team~A is mapped to $f(x)\in\{0,1,*\}$ by $00\to 0,\,11\to 1$, and $01,10\to *$;
a similar function $g$ is applied to the action~$y$ of Team~B; in addition, a $2 \times 2$ payoff
matrix $U$ and a scalar~$z$ are specified. The payoffs are as follows: if $(f(x),g(y))\in\{0,1\}^2$ then the payoff
is $U_{f(x),g(y)}$. If $f(x)\in \{0,1\}$ while $g(y)=*$ then the payoff is~$z$; if $f(x)=*$ while $g(y)\in \{0,1\}$ then
the payoff is $-z$; if $f(x)=g(y)=*$ then the payoff is~$0$. 

Most often in this literature $U$ has been chosen as the Matching Pennies game, i.e., $U_{0,0}=U_{1,1}=1$
and $U_{0,1}=U_{1,0}=-1$.  This was the case, e.g., in~\cite{SchulVaz},
where the game in Example~2 is exactly as in the present paper with the setting $z=1$. A similar game based on Matching Pennies was used in~\cite{PilSch}.
In~\cite{KPV-G}, which is concerned with the very different question of computing Nash equilibria, 
the game of~\cite{SchulVaz} was used but with a payoff  $z\in(0,1)$, and is referred to as ``Generalized Matching
Pennies."  This coincides with the game in the present paper with $z\in(0,1)$.
\end{remark}

\subsection{Outline of the analysis}\label{subsec:outline}

Let $\Omega=[-1,+1]^4$ denote the domain of $(\delta_1,\delta_2,\epsilon_1,\epsilon_2)$.
We define the following subsets of~$\Omega$:
\begin{eqnarray*}
S&=&\{(\delta_1,\delta_2,\epsilon_1,\epsilon_2)\in \Omega:\ [\delta_1=\delta_2]\wedge [\epsilon_1=\epsilon_2]\};\\
\check S&=&\{(\delta_1,\delta_2,\epsilon_1,\epsilon_2)\in \Omega:\ [\delta_1=-\delta_2]\wedge [\epsilon_1=-\epsilon_2] \}.
\end{eqnarray*}
We refer to $S$ and~$\check S$ as the ``symmetric" and ``skew-symmetric" subspaces respectively.  
Note from~\eqref{eqn:deltadot}-\eqref{eqn:epsdot} that both~$S$ 
and~$\check S$ are closed under the dynamics.  The symmetric subspace~$S$ corresponds to the situation
where both players in Team~A are constrained to play the same strategy, as are both players in Team~B.

At a very high level, our analysis of the game dynamics will proceed as follows.  First, we will show in Section~\ref{subsec:subspaceS}
that, for initial conditions inside the symmetric subspace~$S$ (minus the origin), property~\eqref{eq:entcond1} fails.
This is a 2-dimensional subspace with parameters $\delta=\delta_1=\delta_2$ and $\epsilon=\epsilon_1=\epsilon_2$,
and hence much easier to analyze.  We show this property
by proving that the dynamics in this subspace ``spirals outwards", getting closer and closer to each of the four
corners $\delta,\epsilon\in\{\pm1\}$ over an infinite subsequence of times.  Since the corners correspond
to monocultures, this establishes that entropy cannot be bounded away from zero.  Then, in Section~\ref{subsec:subspaceScheck},
we extend the
analysis to the whole of~$\Omega$, except for the skew-symmetric subspace~$\check S$.  This is achieved
by showing that, starting anywhere outside~$\check S$, the dynamics eventually approaches the symmetric
subspace~$S$, so that the previous analysis inside~$S$ still applies.  
Since $\check S$ has measure~0, this proves Theorem~\ref{thm:strongfail}.  (A much simpler analysis---see 
Remark~\ref{rem:skewsymm} below---shows
that trajectories inside~$\check S$ actually approach the origin, so property~\eqref{eq:entcond1} does hold there.)

\subsection{Potential functions}\label{subsec:potfuns}

It will be useful to establish some potential functions. For a real-valued function~$f$ on~$\Omega$, let 
$\logderiv f := \dot f/f$ denote the logarithmic derivative of~$f$, where $\dot f$ is the time derivative
under the dynamics.  By direct calculation from~\eqref{eqn:deltadot}-\eqref{eqn:epsdot} we find:
\begin{equation}\label{eqn:logderivs1}
\begin{array}{r@{}l}
\logderiv(1-\delta_i^2) &{} = -\frac{1}{2}(\delta_1\delta_2+\delta_i\epsilon_1+\delta_i\epsilon_2); \\[5pt]
\logderiv(1-\epsilon_i^2)&{} =-\frac{1}{2}(\epsilon_1\epsilon_2 -\delta_1\epsilon_i-\delta_2\epsilon_i),
\end{array}         
\end{equation}
and
\begin{equation}\label{eqn:logderivs2}
\begin{array}{r@{}l}
\logderiv(\delta_1-\delta_2)&{}=-\frac{1}{4}(1+\delta_1\delta_2 +(\delta_1+\delta_2)(\epsilon_1+\epsilon_2));\\[5pt]
\logderiv(\epsilon_1-\epsilon_2)&{}=-\frac{1}{4}(1+\epsilon_1\epsilon_2 -(\delta_1+\delta_2)(\epsilon_1+\epsilon_2)).
\end{array}
\end{equation}
We note in passing that, while expressions of the form~\eqref{eqn:logderivs1} hold generally as a consequence of the
dynamical equations~\eqref{eq: delta dot} and~\eqref{eq: epsilon dot}, those in~\eqref{eqn:logderivs2} for $\logderiv(\delta_i-\delta_j)$, $\logderiv(\epsilon_i-\epsilon_j)$
hold because of the specific symmetry property of the utilities in our game, namely that $u(x,y)$ is a function only of $|\{i:x_i=1\}|$ and $|\{j:y_j=1\}|$.

Since $\logderiv(fg)=\logderiv(f)+\logderiv(g)$ and $\logderiv(1/f)=-\logderiv f$, 
taking linear combinations of~\eqref{eqn:logderivs1},~\eqref{eqn:logderivs2} allows us to easily compute logarithmic derivatives of arbitrary
rational functions of the above four quantities. In particular, this allows us to construct \emph{potential functions,}
functions whose derivative is non-increasing in time.  A key role will be played by the function
\begin{equation}\label{eqn:pidef}
\pi(t) := (1-\delta_1^2(t))(1-\delta_2^2(t))(1-\epsilon_1^2(t))(1-\epsilon_2^2(t)),
\end{equation}
whose logarithmic derivative is seen from~\eqref{eqn:logderivs1} to be
\begin{equation}\label{eqn:logderivpi}
\logderiv\pi = -(\delta_1\delta_2+\epsilon_1\epsilon_2).
\end{equation}

We will also use the following pair of potential functions:
\begin{equation}\label{eqn:mudef}
  \mu_A := \frac{(\delta_1(t)-\delta_2(t))^2}{(1-\delta_1^2(t))(1-\delta_2^2(t))}; \quad
  \mu_B := \frac{(\epsilon_1(t)-\epsilon_2(t))^2}{(1-\epsilon_1^2(t))(1-\epsilon_2^2(t))},
\end{equation}
whose logarithmic derivatives are given by
\begin{equation}\label{eqn:Dmu}
   \logderiv\mu_A = -\frac{1}{2}(1-\delta_1\delta_2) \le 0; \quad
   \logderiv\mu_B = -\frac{1}{2}(1-\epsilon_1\epsilon_2) \le 0.
\end{equation}   

\begin{remark}
It may be interesting to note an additional potential function that we will not use, namely
$\sigma(t) := (\delta_1(t)-\delta_2(t))^2 (\epsilon_1(t)-\epsilon_2(t))^2$,
whose logarithmic derivative is
$\logderiv\sigma = -1-(\delta_1\delta_2+\epsilon_1\epsilon_2)/2 \leq 0$.
Observe that this, together with~\eqref{eqn:logderivpi}, implies that $\logderiv (\sigma^2/\pi) = -2$.
\end{remark}

\subsection{The dynamics in the subspace $S$}\label{subsec:subspaceS}

As indicated above, we first consider the special case in which the dynamics is confined to the closed subspace~$S$,
so that $\delta_1=\delta_2=\delta$ and $\epsilon_1=\epsilon_2=\epsilon$.  We may thus restrict to the
two-dimensional domain $\Omega':=[-1,1]^2$ with coordinates $(\delta,\epsilon)$.  In this case, we see 
from~\eqref{eqn:logderivpi} that
$\logderiv \pi = -(\delta^2+\epsilon^2) \le 0$ at all times,
and hence $\pi$ is monotonically decreasing (strictly except at the origin $\delta=\epsilon=0$).
Moreover, 
\begin{equation}\label{eqn:logpi}
\log \pi(t) = \log \pi(0) + \int_0^t \logderiv\pi(s) ds = \log \pi(0) - \int_0^t (\delta^2(s)+\epsilon^2(s)) ds.
\end{equation}

Let $\rho\in (0,1)$, and consider the open set $C_{\rho} := \{(\delta,\epsilon)\in\Omega': \delta^2+\epsilon^2-\delta^2\epsilon^2<\rho\}$.
Note that $C_\rho \supset B_\rho$ where $B_\rho \{(\delta,\epsilon)\in\Omega': \delta^2+\epsilon^2<\rho\}$ is the open $\ell_2$ ball of radius~$\rho$.  Observe also that, if the initial
state $(\delta(0),\epsilon(0))$ of the trajectory lies outside~$C_\rho$, then the time-$t$ state $(\delta(t),\epsilon(t))$
also lies outside~$C_\rho$ at all times~$t\ge 0$: this is because $\delta^2+\epsilon^2-\delta^2\epsilon^2 = 1-\sqrt\pi$, and $\pi$ is decreasing.  Since $B_\rho\subset C_\rho$, this ensures that, along any such 
trajectory, $\delta^2(t) + \epsilon^2(t)\ge\rho$ for all $t\ge 0$, 
so $\log \pi(t)\rightarrow -\infty$ as $t\rightarrow\infty$. Therefore $\pi(t)\rightarrow 0$
as $t\rightarrow\infty$, which in turn implies that 
$\max\{|\delta(t)|,|\epsilon(t)|\}\rightarrow 1$ as $t\rightarrow\infty$.

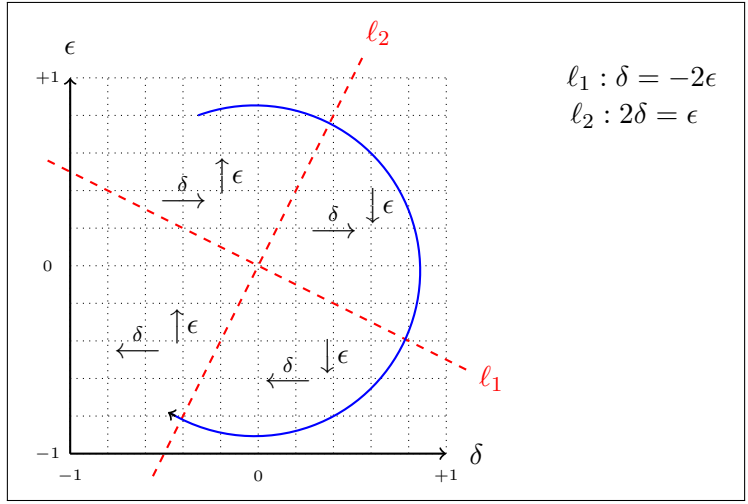
\begin{figure}
\begin{center}
\fbox{
\begin{tikzpicture}
\draw[step=0.5cm,black,dotted] (-2.5,-2.5) grid (2.5,2.5);

\draw[red,dashed,thick] (-1.4,-2.8) -- (1.4,2.8);
\draw[red] (1.6,3.1) node{$\ell_2$};
\draw (5,2.0) node{$\ell_2: 2\delta = \epsilon$};

\draw[red,dashed,thick] (-2.8,1.4) -- (2.8,-1.4);
\draw[red] (3.1,-1.5) node{$\ell_1$};
\draw (5.1,2.5) node{$\ell_1: \delta = -2\epsilon$};

\draw[thick,->] (-2.5,-2.5) -- (2.5,-2.5);
\draw (-2.5,-2.8) node{\tiny $-1$};
\draw (0,-2.8) node{\tiny $0$};
\draw (2.5,-2.8) node{\tiny $+1$};
\draw (2.9,-2.5) node{$\delta$};

\draw[thick,->] (-2.5,-2.5) -- (-2.5,2.5);
\draw (-2.8,-2.5) node{\tiny $-1$};
\draw (-2.8,0) node{\tiny $0$};
\draw (-2.8,2.5) node{\tiny $+1$};
\draw (-2.5,2.9) node{$\epsilon$};

\draw (-1,1) node{$\stackrel{\delta}{\longrightarrow}$};
\draw (-0.4,1.2) node{$\big\uparrow \epsilon$};

\draw (1,0.6) node{$\stackrel{\delta}{\longrightarrow}$};
\draw (1.6,0.8) node{$\big\downarrow \epsilon$};

\draw (0.4,-1.4) node{$\stackrel{\delta}{\longleftarrow}$};
\draw (1.0,-1.2) node{$\big\downarrow \epsilon$};

\draw (-1.6,-1.0) node{$\stackrel{\delta}{\longleftarrow}$};
\draw (-1.0,-0.8) node{$\big\uparrow \epsilon$};

\draw[blue,thick] (-0.8,2) arc (110:-120:2.2cm);
\draw[thick,->] (-1.1,-2) -- (-1.2,-1.95);
\end{tikzpicture}
}
\end{center}
\caption{Visualization of the proof of Theorem~\ref{thm:strongfail}}\label{fig:cycling}
\end{figure}

Next, notice from Equations~\eqref{eqn:deltadot}-\eqref{eqn:epsdot}
that if $\delta > -2\epsilon$ then $\dot{\delta} > 0$, and if $\delta < -2\epsilon$
then $\dot{\delta} < 0$. Similarly, if $\epsilon > 2\delta$ then $\dot{\epsilon} > 0$, and if
$\epsilon < 2\delta$ then $\dot{\epsilon} < 0$. So, the range $\Omega'$ of $(\delta,\epsilon)$
values is partitioned into four (not axis-parallel) quadrants (see Figure~\ref{fig:cycling}), where at the
boundaries either $\dot{\delta} = 0$ or $\dot{\epsilon} = 0$.  Each
of these quadrants contains exactly one line segment along which $|\delta|=|\epsilon|$; we
will refer to these segments as ``arms."

We will use the above properties of the derivatives in the quadrants to show that the 
trajectory passes in finite time from each arm to the next (clockwise).  This will in turn imply
that the trajectory starting from any $(\delta(0),\epsilon(0))\notin C_{\rho}$ passes 
infinitely often through the four arms, and hence, in light of our earlier observation
that $\max\{|\delta(t)|,|\epsilon(t)|\}\rightarrow 1$ as $t\rightarrow\infty$,
there exists an infinite sequence of times $0\le s_1 < s_2 < s_3 < \cdots$ such that, along this sequence, 
both $|\delta(s_j)|\rightarrow 1$ and $|\epsilon(s_j)|\rightarrow 1$ as $j\rightarrow\infty$.
This implies directly that $\lim [ H(p(s_j))+ H(q(s_j)) ] = 0$. Furthermore, we see from~\eqref{eqn:deltadot}-\eqref{eqn:epsdot} that $|\dot \delta|, |\dot \epsilon| \leq 3/4$, so
outside of $C_\rho$ there is a lower bound of $(4/3)\sqrt{2\rho}$ on any interval $s_{i+1}-s_i$. These two 
conclusions imply the theorem.

Figure~\ref{fig:simulation} gives a numerical simulation of a typical trajectory,
showing the outward-spiraling behavior. 

\begin{figure}[ht]
\begin{center}
\includegraphics[height=3in,angle=0]{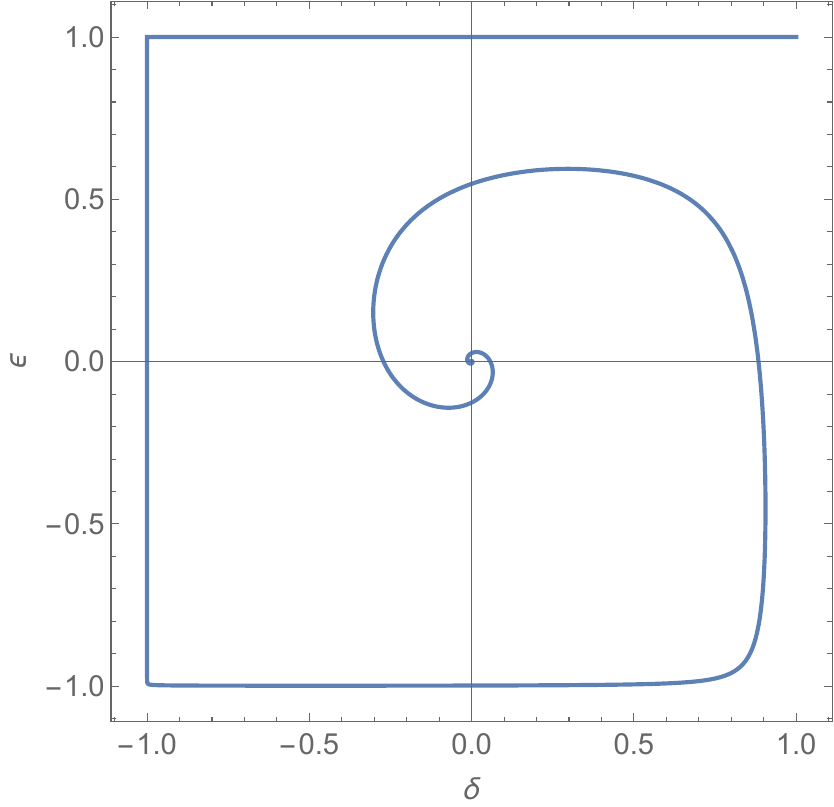} \quad
\includegraphics[height=3in,angle=0]{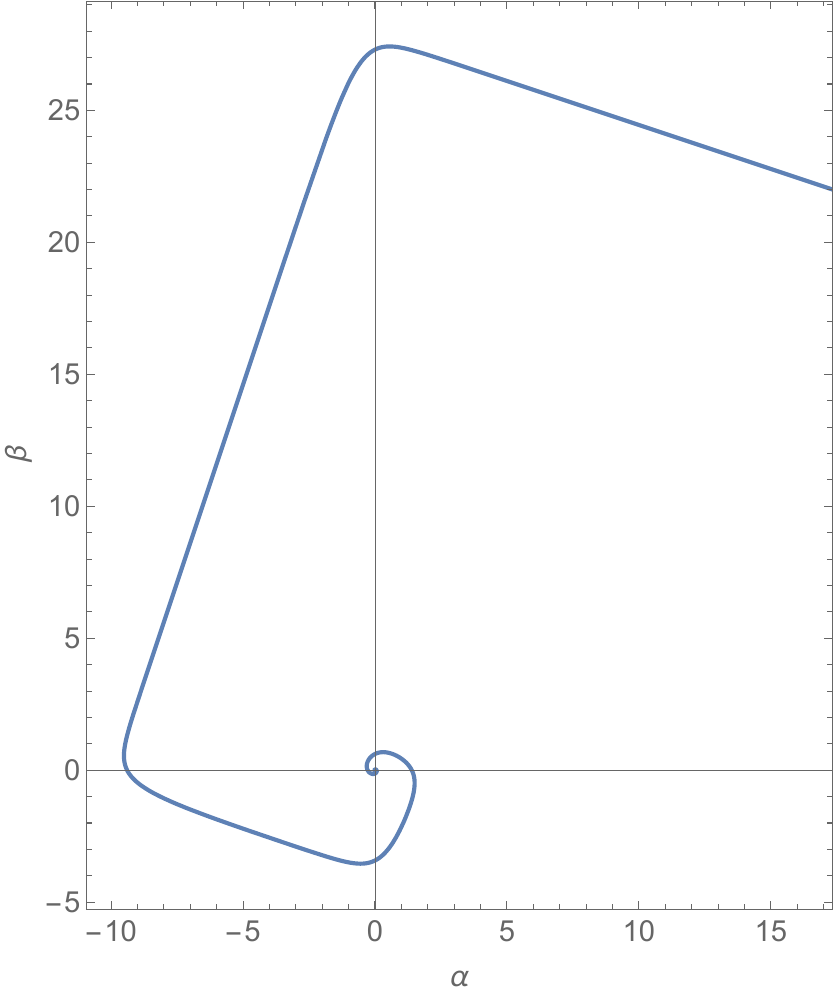}
\caption{A trajectory in $S$ for $t\in[0,150]$ for the payoff matrix~\eqref{nicequadgame},
              from initial condition $\delta=\delta_1 = \delta_2 = -10^{-6}$, $\epsilon=\epsilon_1 = \epsilon_2 = 0$. Plotted in terms of $(\delta,\epsilon)$ (left) and $(\alpha,\beta)$ (right). Each long nearly-straight part of the trajectory, on the right, corresponds to a traversal through one of the corners, on the left. We can conclude from this for example that the times spent in the corners increase roughly geometrically.}
\label{fig:simulation}
\end{center}
\end{figure}

We now prove the claim that the trajectory moves in finite time 
from one arm to the next (clockwise).  We prove this for just one adjacent pair of arms,
namely the ``north-west" and ``north-east" arms;
the same holds for the other three pairs by $C_4$ symmetry.

\begin{claim}\label{claim:cycling}
Suppose the trajectory starts at a point $a:=(-\alpha,\alpha)\notin C_\rho$ with $\alpha\in(0,1)$, i.e.,
$\delta(0)=-\epsilon(0)=-\alpha$.  Then at some finite time~$t$ the trajectory reaches
a point $b:=(\delta(t),\epsilon(t))=(\beta,\beta)$ with $\beta=\beta(\alpha)\in(0,1)$.
\end{claim}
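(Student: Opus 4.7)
My plan is to track the trajectory through two of the four quadrants of $\Omega'$ defined by $\ell_1$ and $\ell_2$: first the \emph{upper quadrant} (above both $\ell_1$ and $\ell_2$) containing the starting point $a=(-\alpha,\alpha)$, in which $\dot\delta,\dot\epsilon>0$; and then the \emph{right quadrant} (above $\ell_1$, below $\ell_2$) whose boundary contains the target northeast arm $\{\delta=\epsilon>0\}$, in which $\dot\delta>0$, $\dot\epsilon<0$. In each quadrant I identify a monotone coordinate with a uniform positive lower bound on its derivative, forcing a finite-time exit in the correct direction.

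\textbf{Step 1 (exit from the upper quadrant).} In the upper quadrant, the signs of $\dot\delta,\dot\epsilon$ give $\delta(t)\ge -\alpha$ and $\epsilon(t)\ge \alpha$ throughout, so $\delta+2\epsilon\ge \alpha$. Since the trajectory stays in the interior of $\Omega'$ at all finite times, $\epsilon(t)<1$; combined with $\epsilon>2\delta$ in the upper quadrant this forces $\delta<1/2$, so $1-\delta^2$ is bounded below by a positive constant depending only on $\alpha$. Plugging into~\eqref{eqn:deltadot} yields $\dot\delta$ bounded below by a positive constant, so $\delta$ cannot stay below $1/2$ forever. Exit across $\ell_1$ is ruled out by $\dot\delta+2\dot\epsilon>0$ (the trajectory moves monotonically away from $\ell_1$), and the boundary of the cube is unreachable in finite time, so the trajectory must cross $\ell_2$ at some finite time $t_1$, with $\epsilon(t_1)=2\delta(t_1)\ge \alpha>0$.

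\textbf{Step 2 (exit from the right quadrant).} Let $\phi:=\delta-\epsilon$; then $\phi(t_1)=-\delta(t_1)<0$, and $\dot\phi=\dot\delta-\dot\epsilon>0$ throughout the right quadrant. Suppose for contradiction that $\phi$ never reaches zero, so $\phi<0$ (i.e., $\delta<\epsilon$) throughout; combined with $\epsilon<2\delta$ this forces $0<\delta<\epsilon$, and in particular $\epsilon>0$. Then the trajectory cannot exit the right quadrant: exit across $\ell_1$ would require $\epsilon=-\delta/2<0$, contradicting $\epsilon>0$; exit across $\ell_2$ is blocked by $\dot\epsilon-2\dot\delta<0$; and the boundary is unreachable. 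So the trajectory stays in the right quadrant forever, and the monotonicity of $\delta$ and $\epsilon$ gives limits $\delta^*,\epsilon^*\in(0,1)$ with $\delta^*\le \epsilon^*$; but then $\dot\delta\to\tfrac{1}{4}(1-\delta^{*2})(\delta^*+2\epsilon^*)>0$, contradicting $\delta\to\delta^*$. Hence $\phi$ reaches zero at some finite time $t$, at which $\delta(t)=\epsilon(t)$; the constraint $\epsilon<2\delta$ in the right quadrant forces this common value to be strictly positive, and interiority gives it strictly less than $1$, yielding $b=(\beta,\beta)$ with $\beta\in(0,1)$. The main subtlety throughout is ensuring finite-time rather than merely asymptotic exit despite the factors $(1-\delta^2)$ and $(1-\epsilon^2)$ in~\eqref{eqn:deltadot}--\eqref{eqn:epsdot}, which degenerate at the corners of $\Omega'$; this is handled by the observation that the trajectory's own monotonicity keeps the critical coordinate bounded away from $\pm 1$ inside each quadrant by an amount determined by $\alpha$, so no delicate corner analysis is required.
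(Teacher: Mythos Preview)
Your proof is correct and follows essentially the same two-phase decomposition as the paper: first from the northwest arm to the line~$\ell_2$, then from~$\ell_2$ to the northeast arm. The only difference is in execution: the paper writes down and integrates explicit differential inequalities for~$\delta$ (obtaining quantitative time bounds), whereas you use a constant lower bound on~$\dot\delta$ in Step~1 and a limit/contradiction argument in Step~2. One small imprecision: your justification that $\beta>0$ via ``the constraint $\epsilon<2\delta$'' is not quite the right reason at the terminal time (where the inequality may degenerate); the clean argument is simply that $\delta$ is strictly increasing on $(t_1,t)$ from $\delta(t_1)>0$, so $\beta=\delta(t)>\delta(t_1)>0$.
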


\begin{proof}
The proof proceeds in two phases: first we show that the trajectory reaches a point $c:=(\gamma,2\gamma)$
with $\gamma\in(0,1/2)$ in finite time~$t'=t'(\alpha)$; then we
show that the trajectory starting from~$c$ reaches~$b$ in a further finite time~$t''=t''(\gamma)$.
Setting $t=t'+t''$ then proves the Claim.

For the first phase, we observe from~\eqref{eqn:deltadot} that the trajectory starting at $(-\alpha,\alpha)$
has $\dot{\delta} = \frac{1}{4}(1-\delta^2)(\delta+2\epsilon) \ge \frac{\alpha}{4}(1-\delta^2)$, since
$\delta\ge-\alpha$ and $\epsilon\ge \alpha$ throughout this phase.  Solving this differential equation
yields
\begin{equation}\label{eqn:desoln}
   \delta(t)\ge \frac{Ce^{\alpha t/2}-1}{Ce^{\alpha t/2}+1},  \qquad\hbox{\rm where $\displaystyle C=\frac{1+\alpha}{1-\alpha}$}. 
\end{equation}   
Since the maximum possible value of~$\delta$ on~$\ell_2$ is~$\frac{1}{2}$,
the trajectory must reach a point on~$\ell_2$ within finite time $t'\le \frac{2}{\alpha}\ln\left(\frac{3(1-\alpha)}{1+\alpha}\right)$.
Similarly, from~\eqref{eqn:epsdot}, the evolution of~$\epsilon$ during this phase satisfies
$\dot{\epsilon} = -\frac{1}{4}(1-\epsilon^2)(2\delta-\epsilon) \le\frac{1+2\alpha}{4}(1-\epsilon^2)$, with a solution of the form $$
   \epsilon(t)\le \frac{C'e^{(1+2\alpha)t/2}-1}{Ce^{(1+2\alpha)t/2}+1} \qquad\hbox{\rm for some $C'>0$}.  $$
Thus, at time~$t'$ as above, $\epsilon$~takes some value $2\gamma<1$, ensuring that the trajectory reaches~$\ell_2$
at the point $(\gamma,2\gamma)$. Since $\dot \epsilon \geq 0$ during this phase, $2\gamma\geq \alpha$. 

For the second phase, we note that the evolution of~$\delta$ satisfies 
$\dot{\delta} = \frac{1}{4}(1-\delta^2)(\delta+2\epsilon) \ge \frac{3\alpha}{8}(1-\delta^2)$, since
here $\epsilon \geq \delta \geq \gamma \geq \alpha/2$. 
Moreover, since $\dot \epsilon\leq 0$ during this phase, and $\epsilon \geq \delta$, we have $\delta\leq 2\gamma$ throughout the phase, so $\dot \delta$ is bounded away from $0$, hence 
 the trajectory reaches the arm $\delta=\epsilon$ in finite time~$t''(\gamma)$. 
Moreover, it does so at a point $b=(\beta,\beta)$ with $\beta\le2\gamma<1$. 
\end{proof}

To conclude the desired property that a trajectory starting from any $(\delta(0),\epsilon(0))\notin C_{\rho}$
passes infinitely often through all four arms, we simply apply Claim~\ref{claim:cycling} (taking advantage of the $C_4$ symmetry) repeatedly.  For the initial portion of the trajectory, whose starting point
$(\delta(0),\epsilon(0))$ may not lie on an arm, it should be clear that the arguments in the proof of
Claim~\ref{claim:cycling} also imply that the trajectory reaches a point $(\alpha,\pm\alpha)$ on 
the closest arm (clockwise), with $\alpha\in(0,1)$, in finite time.

This completes the proof of the theorem for trajectories starting in $S\setminus C_\rho$, 
for any $\rho>0$, and hence for trajectories starting in $S\setminus\{0\}$.
For the full proof, i.e., for trajectories starting in $(0,1)^4 \setminus \check S$,
we will show that any trajectory starting outside~$\check S$
eventually behaves in a similar manner to that described above for trajectories
in~$S\setminus\{0\}$. 

\begin{remark}\label{rem:skewsymm}
For trajectories in the skew-symmetric subspace~$\check S$, we
see from~\eqref{eqn:deltadot}-\eqref{eqn:epsdot} that $\dot\delta = -\frac{1}{4}(1-\delta^2)\delta$
and $\dot\epsilon = -\frac{1}{4}(1-\epsilon^2)\epsilon$, so the dynamics in~$\delta$ and~$\epsilon$
are independent and it is immediately clear that all such trajectories are attracted to the origin.
Thus inequality~\eqref{eq:entcond1} actually does hold in the set~$\check S$ of measure zero.
\end{remark}

\subsection{The dynamics in $(0,1)^4 \setminus \check S$} \label{subsec:subspaceScheck}
Consider now the full four-dimensional domain~$\Omega$.
To analyze trajectories outside~$\check S$, we make use of the potential functions $\mu_A,\mu_B$
defined in~\eqref{eqn:mudef}, whose logarithmic derivatives are given in~\eqref{eqn:Dmu}.
We focus on~$\mu_A$; a symmetrical analysis will then apply to~$\mu_B$.  Note from~\eqref{eqn:Dmu}
that $\logderiv\mu_A\le 0$ with equality only at the boundary values $\delta_1=\delta_2=\pm 1$.  
This implies that $\mu_A$ is monotonically (weakly) decreasing and, since it is bounded below (by~0),
it must tend to a limit $\mu_\infty\ge 0$.  We will show:

\begin{lemma} $\mu_\infty=0$.  \end{lemma}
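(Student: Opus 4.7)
The plan is to argue by contradiction: assume $\mu_\infty > 0$ and derive incompatibility with the dynamics.  Integrating the log-derivative formula from~\eqref{eqn:Dmu} yields
$$\log \mu_A(t) = \log \mu_A(0) - \frac{1}{2}\int_0^t (1 - \delta_1(s)\delta_2(s))\,ds,$$
so the assumption $\mu_\infty > 0$ forces $\int_0^\infty (1 - \delta_1 \delta_2)\,ds < \infty$.  Since $|\dot \delta_i| \le 3/4$ by direct inspection of~\eqref{eqn:deltadot}, the integrand is Lipschitz and hence uniformly continuous, so Barbalat's lemma yields $\delta_1(t) \delta_2(t) \to 1$.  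Because each $\delta_i$ is continuous with $|\delta_i(t)|\to 1$, the intermediate value theorem rules out oscillation between neighborhoods of $+1$ and~$-1$, so each $\delta_i(t)$ converges to a common $\sigma \in \{\pm 1\}$.  The dynamics commutes with the involution $(\delta,\epsilon) \mapsto (-\delta,-\epsilon)$, so we may assume $\sigma = +1$.

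Next I would show that this drags the $\epsilon$-coordinates to $-1$.  For $t$ sufficiently large we have $\delta_1 + \delta_2 \ge 3/2$, so $\delta_1 + \delta_2 - \epsilon_{3-i} \ge 1/2$, and \eqref{eqn:epsdot} gives $\dot \epsilon_i \le -\frac{1}{8}(1-\epsilon_i^2) \le 0$.  Hence each $\epsilon_i$ is eventually non-increasing and bounded below by~$-1$, so converges to some $L_i \in [-1, 1)$; if $L_i > -1$ then $\dot \epsilon_i$ would remain bounded away from zero, forcing $\epsilon_i \to -\infty$, a contradiction.  Thus $\epsilon_1(t), \epsilon_2(t) \to -1$.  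Returning to~\eqref{eqn:deltadot}, for $t$ large enough $\epsilon_1 + \epsilon_2 < -3/2$ and $\delta_{3-i} \le 1$, so $\delta_{3-i} + \epsilon_1 + \epsilon_2 < -1/2$, giving $\dot \delta_i \le -\frac{1}{8}(1-\delta_i^2) < 0$.  But then $\delta_i$ is strictly decreasing from some time on, incompatible with $\delta_i(t) \to 1$ from within $(-1,1)$.

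The main obstacle is the middle step: one must verify that the coefficient of $(1-\epsilon_i^2)$ in~\eqref{eqn:epsdot} is bounded strictly away from zero for all sufficiently large~$t$, so that the $\epsilon$-coordinates are not merely monotone but are actually driven all the way to the boundary, and then symmetrically that the coefficient in~\eqref{eqn:deltadot} is forced strictly negative in the final step.  The argument works because the corner $(1,1,-1,-1)$---the only one the trajectory could approach under the contradictory hypothesis $\mu_\infty>0$ with $\sigma=+1$---is not a Nash equilibrium, and in particular is repelling in the $\delta$-directions once $\epsilon$ is near~$-1$.
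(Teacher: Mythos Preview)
Your argument is correct and follows the same overall contradiction strategy as the paper: from finiteness of $\int_0^\infty(1-\delta_1\delta_2)\,ds$ deduce that the trajectory is eventually trapped near $\delta_1=\delta_2=+1$ (up to the sign symmetry), then use~\eqref{eqn:epsdot} to drive $\epsilon_1,\epsilon_2\to -1$, and finally use~\eqref{eqn:deltadot} to force $\delta_i$ to decrease, contradicting $\delta_i\to 1$.  The last two stages match the paper's essentially verbatim.

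The one genuine difference is how you pass from ``finite integral'' to ``$\delta_i\to\sigma$.''  The paper does this by hand: it fixes $\eta>0$, defines the corner neighborhood $\Gamma_\eta=\{\delta_1,\delta_2\ge 1-\eta\}\cup\{\delta_1,\delta_2\le -1+\eta\}$, uses the speed bound $|\dot\delta_i|\le\frac34(1-\delta_i^2)$ to show each excursion outside $\Gamma_\eta$ contributes at least a fixed amount to the integral, and concludes that after finitely many excursions the trajectory is confined to one component of~$\Gamma_\eta$.  You instead observe that $1-\delta_1\delta_2\ge 0$ is Lipschitz (from $|\dot\delta_i|\le 3/4$), invoke Barbalat's lemma to get $\delta_1\delta_2\to 1$ directly, and then use the intermediate value theorem to upgrade $|\delta_i|\to 1$ to actual convergence $\delta_i\to\sigma$.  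Your route is cleaner and more portable; the paper's excursion-counting is more self-contained but is in effect an ad hoc proof of Barbalat in this setting.  Either way, the endgame---noting that the limiting corner $(1,1,-1,-1)$ is repelling in the $\delta$-directions once $\epsilon$ is near $-1$---is the same.
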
 
\begin{proof} Consider the expression
\begin{equation}\label{eqn:muint}
  \log \mu_\infty = \log\mu_A(0) + \int_0^\infty \logderiv \mu_A(t) dt = \log\mu_A(0) -\frac{1}{2} \int_0^\infty (1-\delta_1(t)\delta_2(t)) dt.  
\end{equation}  
If the right-hand integral in~\eqref{eqn:muint} is unbounded, then necessarily $\mu_\infty=0$ as desired.

We show that the integral is indeed unbounded. Suppose for contradiction that its value is $c<\infty$.
For $\eta\in(0,1)$ let
\begin{align*} \Gamma_\eta & = \{(\delta_1,\delta_2): [\delta_1,\delta_2\geq 1-\eta]
 \vee [\delta_1,\delta_2\leq -1+\eta] \} ; \\
   T_\eta &= \{t\ge 0: (\delta_1(t),\delta_2(t))\notin \Gamma_\eta\}. \end{align*}
Thus $T_\eta$ is the set of times at which the $(\delta_1,\delta_2)$ projection of the dynamics is outside
both of the $\ell_\infty$ balls of radius~$\eta$ around the corners $\delta_1=\delta_2=\pm1$.
For all $t\in T_\eta$, the integrand in~\eqref{eqn:muint}
satisfies $1-\delta_1(t)\delta_2(t)\ge \eta$.  
Now note that for any $t_0\in T_\eta$, $t_0+t \in T_{\eta/2}$
for $0\le t\le 1/3$.
 This follows from the
dynamical equation~\eqref{eqn:deltadot}, which implies that 
$| \dot \delta_i | \le\frac{3}{4}(1-\delta_i^2)$, and thus
$\delta_i$ cannot move from $1-\eta$ to $1-\eta/2$
(respectively, from $-1+\eta$ to $-1+\eta/2$) in less than time $1/3$.

Now if $T_\eta \neq \emptyset$, let $t_1=\inf T_\eta$ (e.g., $t_1=0$ if the dynamics starts outside~$\Gamma_\eta$).  By the observations above, the time interval
$[t_1,t_1+1/3)$ contributes at least $c'\eta \; (c'>0)$ to the integral. Next if $T_\eta \backslash [t_1,t_1+1/3) \neq \emptyset$, 
let $t_2=\inf T_\eta \backslash [t_1,t_1+1/3)$; the interval $[t_2,t_2+1/3]$ again contributes
a further $c'\eta$ to the integral.  
Continuing in this fashion, 
there can be at most $\frac{c}{c'\eta}$ such contributions.  This implies that, after some finite time~$\hat t$,
the dynamics never again leaves~$\Gamma_\eta$.  

Assume w.l.o.g.\ that
$\delta_1,\delta_2\ge 1-\eta$ for all $t\ge\hat t$. (A similar argument will apply if $\delta_1,\delta_2\le -1+\eta$  for all $t\ge\hat t$.) Note from~\eqref{eqn:epsdot} that, at all such times, $$
   \dot\epsilon_i = -\frac{1}{4}(1-\epsilon_i^2)(\delta_1+\delta_2-\epsilon_{3-i}) \le -\frac{1}{4}(1-\epsilon_i^2)(1-2\eta).  $$
Thus $\epsilon_1,\epsilon_2$ will both decrease monotonically and eventually become less than $-\frac{3}{4}$
by a finite time~$\hat{t}'$.  But in that case \eqref{eqn:deltadot} gives 
$$
   \dot{\delta}_i  =  \frac 1 4 (1 - \delta_i^2) (\delta_{3-i} + \epsilon_1 + \epsilon_2) \le -\frac{1}{8}(1 - \delta_i^2),    $$
implying that the $\delta_i$ decrease monotonically after $\hat{t}'$.  Thus in finite time one of the $\delta_i$ drops below $1-\eta$
and the trajectory exits~$\Gamma_\eta$, a contradiction.  
Hence we conclude that  our initial assumption that the integral in~\eqref{eqn:muint} is bounded must be false,
and therefore $\mu_\infty=0$.   
\end{proof}

An analogous analysis of the function~$\mu_B(t)$ (which is identical to~$\mu_A(t)$ with the $\delta_i$ replaced
by~$\epsilon_i$) shows that $\mu_B$ is also decreasing and has limit~$0$.
It follows that both $\delta_1(t)-\delta_2(t)$ and $\epsilon_1(t)-\epsilon_2(t)$ tend to~0
as $t\to\infty$.  Moreover, 
since the logarithmic derivatives $\logderiv(\delta_1(t)-\delta_2(t))$ and $\logderiv(\epsilon_1(t)-\epsilon_2(t))$
are both bounded in absolute value (by 3/2), the quantities $\delta_1(t)-\delta_2(t)$ and 
$\epsilon_1(t)-\epsilon_2(t)$ each remain positive, zero or negative throughout the trajectory. By the symmetries of the system we can w.l.o.g.\ suppose both are nonnegative, and we continue to assume so throughout this section.
Thus we have shown:
\begin{lemma} \label{closetoS}
For any $\eta>0$ and any initial point with $\delta_1\geq \delta_2$ and $\epsilon_1\geq \epsilon_2$,
 there is a time $t_0$ such that, for all $t\geq t_0$, the trajectory is confined to the region $\Delta_\eta$ defined by the inequalities 
\begin{equation}\label{eqn:closetoS}
    \delta_2 \leq \delta_1 \leq \delta_2 +\eta\qquad\hbox{\rm and}\qquad 
    \epsilon_2 \leq \epsilon_1 \leq \epsilon_2 +\eta.
\end{equation} \end{lemma}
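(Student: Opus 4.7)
The plan is to deduce this lemma directly from the two facts already established in the preceding analysis: namely, that $\mu_A(t) \to 0$ and (by the analogous argument applied to team~B) $\mu_B(t) \to 0$ as $t\to\infty$, together with the sign-preservation observation noted in the remarks just before the lemma.

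First, I would extract the convergence of the coordinate differences. From the definition \eqref{eqn:mudef} and the trivial bound $(1-\delta_i^2) \le 1$, we have
\begin{equation*}
   (\delta_1(t)-\delta_2(t))^2 \;\le\; \mu_A(t), \qquad (\epsilon_1(t)-\epsilon_2(t))^2 \;\le\; \mu_B(t),
\end{equation*}
so the convergence $\mu_A, \mu_B \to 0$ forces $\delta_1(t)-\delta_2(t)\to 0$ and $\epsilon_1(t)-\epsilon_2(t)\to 0$.

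Second, I would justify that these differences retain the sign of their initial values. By~\eqref{eqn:logderivs2}, $\logderiv(\delta_1-\delta_2)$ and $\logderiv(\epsilon_1-\epsilon_2)$ are polynomial expressions in the coordinates and are therefore uniformly bounded in absolute value (by $3/2$) on~$\Omega$. Hence for all $t\ge 0$,
\begin{equation*}
   \delta_1(t)-\delta_2(t) \;=\; \bigl(\delta_1(0)-\delta_2(0)\bigr)\exp\!\Bigl(\int_0^t \logderiv(\delta_1-\delta_2)(s)\,ds\Bigr),
\end{equation*}
and similarly for $\epsilon_1-\epsilon_2$. Since the exponential factor is strictly positive, the sign of each difference is preserved for all time. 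Under the standing assumption $\delta_1(0)\ge\delta_2(0)$ and $\epsilon_1(0)\ge\epsilon_2(0)$, both differences therefore remain nonnegative.

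Combining these two facts, for any $\eta>0$ there is a time $t_0$ such that $0\le\delta_1(t)-\delta_2(t)\le\eta$ and $0\le\epsilon_1(t)-\epsilon_2(t)\le\eta$ for all $t\ge t_0$; this is precisely the region $\Delta_\eta$ defined in~\eqref{eqn:closetoS}. There is no real obstacle here once the previous lemma is in hand — the only mild subtlety is the sign-preservation step, which relies on the boundedness of the logarithmic derivatives from~\eqref{eqn:logderivs2}. (If the initial signs were the opposite, the $C_4$ and coordinate-swap symmetries of the dynamics reduce to the case treated.)
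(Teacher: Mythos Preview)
Your proposal is correct and follows essentially the same route as the paper: the paper likewise deduces the lemma from $\mu_A,\mu_B\to 0$ (giving $\delta_1-\delta_2,\epsilon_1-\epsilon_2\to 0$) together with the sign-preservation argument based on the boundedness of $\logderiv(\delta_1-\delta_2)$ and $\logderiv(\epsilon_1-\epsilon_2)$ from~\eqref{eqn:logderivs2}. Your presentation is slightly more explicit in writing out the inequality $(\delta_1-\delta_2)^2\le\mu_A$ and the exponential integral formula, but the underlying argument is the same.
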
 
Now define
\begin{align*}
 \ltwob &:= \frac12((\delta_1+\delta_2)^2+(\epsilon_1+\epsilon_2)^2) \end{align*}
and note that since we are outside~$\check S$, $\ltwob >0$.
Equations~\eqref{eqn:deltadot} and~\eqref{eqn:epsdot} give
\begin{align} \dot{\ltwob} &= \frac14 [(1 - \delta_1 \delta_2) (\delta_1 + \delta_2)^2 + (1 - \epsilon_1 \epsilon_2) (\epsilon_1 + \epsilon_2)^2 -
 (\delta_1 + \delta_2) (\epsilon_1 + \epsilon_2) (\delta_1^2 + \delta_2^2 - \epsilon_1^2 - \epsilon_2^2)] .\label{dtwob} \end{align}

\begin{lemma} If $\eta\leq \frac{1}{3\sqrt{2}}$ and
$\ltwob\leq 1/2$ then $\dot{\ltwob} > 0$. \label{ltwobD} \end{lemma}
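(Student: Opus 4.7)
The plan is to obtain a uniform pointwise lower bound on $\dot\ltwob$ in the region $\Delta_\eta\cap\{\ltwob\le 1/2\}$, working after a change of variables that exposes the dependence on the $(\delta_1-\delta_2)$ and $(\epsilon_1-\epsilon_2)$ perturbations. Set $d:=\delta_1+\delta_2$, $e:=\epsilon_1+\epsilon_2$, $a:=\delta_1-\delta_2\in[0,\eta]$, $b:=\epsilon_1-\epsilon_2\in[0,\eta]$ (the orientation is the one fixed in Lemma~\ref{closetoS}), so that $2\ltwob=d^2+e^2$. The elementary identities $\delta_1\delta_2=(d^2-a^2)/4$, $\epsilon_1\epsilon_2=(e^2-b^2)/4$, $\delta_1^2+\delta_2^2=(d^2+a^2)/2$ and $\epsilon_1^2+\epsilon_2^2=(e^2+b^2)/2$ rewrite the right-hand side of~\eqref{dtwob} entirely in terms of $(d,e,a,b)$.

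Next I would bound the two ``positive'' summands from below. Since $a^2\ge 0$, we have $1-\delta_1\delta_2\ge 1-d^2/4$, and analogously for $\epsilon$, so
\[
(1-\delta_1\delta_2)\,d^2+(1-\epsilon_1\epsilon_2)\,e^2\ge(d^2+e^2)-\tfrac{1}{4}(d^4+e^4).
\]
Using $d^4+e^4\le(d^2+e^2)^2=4\ltwob^2$, this is at least $2\ltwob-\ltwob^2$. Then I would bound the cross term from above in absolute value: $|de|\le(d^2+e^2)/2=\ltwob$, and
\[
\bigl|\delta_1^2+\delta_2^2-\epsilon_1^2-\epsilon_2^2\bigr|=\tfrac{1}{2}\bigl|d^2-e^2+a^2-b^2\bigr|\le\tfrac{1}{2}\bigl(d^2+e^2+\eta^2\bigr)=\ltwob+\tfrac{\eta^2}{2},
\]
so the cross contribution to $4\dot\ltwob$ is at least $-\ltwob(\ltwob+\eta^2/2)$.

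Combining the two estimates gives
\[
4\dot\ltwob\ \ge\ (2\ltwob-\ltwob^2)-\ltwob\bigl(\ltwob+\tfrac{\eta^2}{2}\bigr)=\ltwob\bigl(2-2\ltwob-\tfrac{\eta^2}{2}\bigr).
\]
With $\ltwob\le 1/2$ and $\eta^2\le 1/18$ (the latter being $\eta\le 1/(3\sqrt{2})$), the bracketed factor is at least $2-1-\tfrac{1}{36}=\tfrac{35}{36}>0$. Finally, $\ltwob>0$ because we are outside $\check S$: by definition $\check S$ requires both $\delta_1+\delta_2=0$ \emph{and} $\epsilon_1+\epsilon_2=0$, so off $\check S$ at least one of $d,e$ is nonzero, forcing $\ltwob>0$. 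Hence $\dot\ltwob>0$.

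I expect the only real obstacle to be the sign of the cross term $de(\delta_1^2+\delta_2^2-\epsilon_1^2-\epsilon_2^2)$, which can be negative and threatens to overwhelm the positive summands. The hypothesis $\eta\le 1/(3\sqrt{2})$ enters only to control the $a^2-b^2$ perturbation of this cross term, while the hypothesis $\ltwob\le 1/2$ ensures that the quadratic growth $2\ltwob$ dominates the quartic correction $\ltwob^2$ coming from the $(1-\delta_i\delta_{3-i})$ factors; both together leave comfortable slack, so no more delicate trigonometric computation (e.g., parametrising $(d,e)$ by polar coordinates) is needed.
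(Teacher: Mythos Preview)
Your argument is correct and, while it shares the sum--difference change of variables with the paper, it proceeds by a genuinely different route. The paper first runs a case analysis: it shows that $\ltwob\le 1/2$ together with $\eta\le 1/(3\sqrt 2)$ forces $\max\{|\delta_i|,|\epsilon_j|\}<1/\sqrt 2$, and then uses this coordinate bound to get $1-\delta_1\delta_2>1/2$, $1-\epsilon_1\epsilon_2>1/2$ and $|\delta_1^2+\delta_2^2-\epsilon_1^2-\epsilon_2^2|<1$, from which $\dot\ltwob>D^2/2+E^2/2-|DE|\ge 0$ follows in one line. You instead avoid any case split and bound $4\dot\ltwob$ directly in terms of $\ltwob$ and $\eta$, obtaining the explicit quantitative inequality $4\dot\ltwob\ge \ltwob\,(2-2\ltwob-\eta^2/2)$. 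Your approach makes transparent how much slack the hypotheses carry (e.g.\ $\eta^2\le 1/18$ where $\eta^2<2$ would suffice), and yields a usable lower bound on the speed $\dot\ltwob$ rather than mere positivity; the paper's version is shorter once the coordinate bound is in hand and isolates a clean geometric fact about where the trajectory can be. Either argument is perfectly adequate here.
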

\begin{proof} Write $d=(\delta_1-\delta_2)/2,D=(\delta_1+\delta_2)/2,e=(\epsilon_1-\epsilon_2)/2,E=(\epsilon_1+\epsilon_2)/2$. 
Recall from Lemma~\ref{closetoS} that $d,e\geq 0$. Also, by assumption, $0<D^2+E^2=\ltwob/2 \leq 1/4$. 

First suppose that $\max\{|\delta_1|,|\delta_2|,|\epsilon_1|,|\epsilon_2|\}\geq 1/\sqrt{2}$.  This implies that
\[ \max\{D^2,E^2\} \geq \frac{1}{\sqrt{2}}(\frac{1}{\sqrt{2}}-\eta) = \frac12 - \eta/\sqrt{2}. \] 
But then $\lambda=2(D^2+E^2) \geq 1-\sqrt{2}\eta=2/3$, contradicting the assumption on~$\lambda$.

Hence $\max\{|\delta_1|,|\delta_2|,|\epsilon_1|,|\epsilon_2|\}< 1/\sqrt{2}$.
Rewriting~\eqref{dtwob} we get
\begin{align*} \dot{\ltwob} &= (1 - \delta_1 \delta_2) D^2 + (1 - \epsilon_1 \epsilon_2) E^2 -
 DE (\delta_1^2 + \delta_2^2 - \epsilon_1^2 - \epsilon_2^2) \\
 & > D^2/2+E^2/2 - |DE| \qquad\text{(using $\max\{|D|,|E|\}>0$)} 
 \\ & \geq 0. \label{dtwob2} \qedhere \end{align*}
 
\end{proof}
\begin{corollary} For any initial point in $(0,1)^4 \setminus \check S$ there is a finite $t_0$ such that 
for all $t\geq t_0$, $\ltwob(t)\geq \min\{1/2, \ltwob(t_0)\}=: \ltwob_0>0$. \end{corollary}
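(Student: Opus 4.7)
The plan is a straightforward barrier argument combining Lemmas~\ref{closetoS} and~\ref{ltwobD}. First, I would fix $\eta \le 1/(3\sqrt{2})$ so that the hypothesis of Lemma~\ref{ltwobD} is met, and invoke Lemma~\ref{closetoS} to obtain a finite time $t_0$ beyond which the trajectory is permanently confined to the region $\Delta_\eta$. Then I would set $\ltwob_0 := \min\{1/2, \ltwob(t_0)\}$ and aim to show $\ltwob(t) \ge \ltwob_0$ for every $t \ge t_0$.

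The crucial first observation is that $\ltwob_0 > 0$. Since $\check S$ is closed under the dynamics, uniqueness of ODE solutions implies that a trajectory starting outside $\check S$ remains outside $\check S$ for all $t \ge 0$, so $\ltwob(t) > 0$ throughout; in particular $\ltwob(t_0) > 0$, whence $\ltwob_0 > 0$. To establish the desired lower bound, I would argue by contradiction: suppose $t_1 := \inf\{t \ge t_0 : \ltwob(t) < \ltwob_0\}$ is finite. Continuity of $\ltwob$ forces $\ltwob(t_1) = \ltwob_0 \le 1/2$, and since $t_1 \ge t_0$ the trajectory at time $t_1$ lies in $\Delta_\eta$; both hypotheses of Lemma~\ref{ltwobD} are therefore satisfied at $t_1$, yielding $\dot\ltwob(t_1) > 0$. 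But then $\ltwob$ is strictly increasing in a right-neighborhood of $t_1$, contradicting the definition of $t_1$ as an infimum of times at which $\ltwob$ drops below $\ltwob_0$.

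The substantive analytic work has already been absorbed into Lemma~\ref{ltwobD}, which is precisely the statement that the level $\ltwob = 1/2$ acts as a one-sided barrier from below once the trajectory has entered $\Delta_\eta$; the only delicate point remaining for the corollary itself is ensuring $\ltwob_0 > 0$, which is exactly where the hypothesis that the initial point lies outside $\check S$ is used. I do not anticipate any further technical obstacles, so the full proof should be essentially the two short paragraphs above.
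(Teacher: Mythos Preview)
Your proposal is correct and is precisely the barrier argument the paper leaves implicit (the corollary is stated without proof, as an immediate consequence of Lemmas~\ref{closetoS} and~\ref{ltwobD}). The only point worth noting is that your invocation of Lemma~\ref{closetoS} tacitly uses the standing w.l.o.g.\ assumption $\delta_1\ge\delta_2$, $\epsilon_1\ge\epsilon_2$ established just before that lemma; since $\ltwob$ and $\check S$ are invariant under the relabeling symmetries, this reduction is harmless.
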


Our next goal is to establish that the potential function~$\pi$ defined in~\eqref{eqn:pidef} is eventually monotone decreasing.
(Compare the situation with the dynamics inside $S$, where $\pi$ is \emph{always} decreasing.) 
Applying Lemma~\ref{closetoS}, let $t_0$ be sufficiently large that the trajectory is confined in $[t_0,\infty)$ to $\Delta_{\sqrt{\frac12 \ltwob_0}}$.

Recalling the expression~\eqref{eqn:logderivpi} for the logarithmic
derivative of~$\pi$, and applying~\eqref{eqn:closetoS} which gives $(\delta_1-\delta_2)^2,(\epsilon_1-\epsilon_2)^2\le \frac12 \ltwob_0$, we 
 see that in $[t_0,\infty)$ 
\begin{align*}
   \logderiv\pi &= -(\delta_1\delta_2+\epsilon_1\epsilon_2) \\ &= 
   -\frac12 \ltwob + \frac14((\delta_1-\delta_2)^2+(\epsilon_1-\epsilon_2)^2) \\
   &\leq -\frac12 \ltwob_0 + \frac14 \ltwob_0 = -\frac14 \ltwob_0.
   \end{align*}
This shows that $\pi(t)\to 0$ as $t\to\infty$, as claimed. 

Consequently $\max\{|\delta_1(t)|,|\delta_2(t)|,|\epsilon_1(t)|,|\epsilon_2(t)|\}\to 1$.  To
complete the argument, we need to show that for all $t$ and $\eta_0>0$ there is a time $s>t$ s.t.\ $\min\{|\delta_1(t)|,|\delta_2(t)|,|\epsilon_1(t)|,|\epsilon_2(t)|\}\geq 1-\eta_0$. The first step is to apply Lemma~\ref{closetoS} so that we are at a time $t_0\geq t$ after which the trajectory is confined to $\Delta_{\eta}$ for
$\eta=\min\{c\eta_0,\sqrt{\ltwob_0}/16\}$ for a small constant $c>0$ (that will emerge from the argument).

We have that $\ltwob(t_0)\geq \ltwob_0$ so that  $\max\{|\delta_1(t_0)|,|\delta_2(t_0)|,|\epsilon_1(t_0)|,|\epsilon_2(t_0)|\}\geq \frac12 \sqrt{\ltwob_0}$. By the $C_4$ symmetry (and our symmetry-breaking assumption $\epsilon_1\geq \epsilon_2$), we may assume that $\epsilon_1(t_0)\geq \max\{|\delta_1(t_0)|,|\delta_2(t_0)|\}$ and 
$\epsilon_1(t_0) \geq \frac12 \sqrt{\ltwob_0} > \frac{1}{16} \sqrt{\ltwob_0}\ge\eta$.  As in the analysis of the 2-dimensional dynamics in~$S$, all that remains to show is that within finite time after $t_0$, 
the trajectory achieves $\epsilon_1=\delta_1$. 
We proceed to mimic the proof of Claim~\ref{claim:cycling} by decomposing progress into phases. But
first we note that at all times
\be 2\ltwob_0\leq (2\delta_1+\eta)^2+(2\epsilon_1+\eta)^2. \label{delarge} \ee
\par\medskip\noindent
Phase I: $-\epsilon_1\leq \delta_1 \leq 0$. 
During this phase we have
$$
  \dot\epsilon_1 = -\frac{1}{4}(1-\epsilon_1^2)(\delta_1+\delta_2-\epsilon_2) 
\geq \frac{1}{4}(1-\epsilon_1^2)(\epsilon_1-2\delta_1-\eta) 
\geq \frac{1}{4}(1-\epsilon_1^2)(\epsilon_1-\eta) >0,
$$
where the first inequality uses the fact that $\epsilon_2\ge \epsilon_1-\eta$ and $\delta_2\le\delta_1$.
The last inequality holds because, by assumption, at time $t_0$ we have $\epsilon_1(t_0) > \eta$, 
and $\epsilon_1$ will continue to increase at least until the condition $\delta_1\le 0$ is violated.

Now we consider~$\dot\delta_1$ during this phase.  We have
$$
   \dot\delta_1 = \frac{1}{4}(1-\delta_1^2)(\epsilon_1+\epsilon_2+\delta_2)
\geq \frac{1}{4}(1-\delta_1^2)(2\epsilon_1+\delta_1-2\eta) 
 \geq \frac{1}{4}(1-\delta_1^2)(\epsilon_1-2\eta)>0,
$$
where we have used the fact that throughout this phase $\epsilon_1+\delta_1 \ge 0$ and
also $\epsilon_1>2\eta$ (which holds at time~$t_0$ and thus throughout the phase since $\dot \epsilon_1>0$). 
In fact,  we have that $\epsilon_1\geq \frac12 \sqrt{\ltwob_0}$ and $\eta \leq \sqrt{\ltwob_0}/16$
and $\delta_1$ is always at least $\delta_1(t_0) > -1$, so $\dot\delta_1 > \frac{3}{32} (1-(\delta_1(t_0))^2) \sqrt{\ltwob_0}$.
Therefore, the phase terminates in finite time.

\par\medskip\noindent
Phase II:  $0\leq \delta_1 \leq  \min\{(1+2\eta)/2,\epsilon_1\}$.
We argue this phase has bounded duration as follows. First, we argue that while $\delta_1\le 2\eta$,
$\epsilon_1$ is non-decreasing. This is because at the start of the phase 
$\epsilon_1\ge \epsilon_1(t_0)\ge \frac12 \sqrt{\ltwob_0}\ge 8\eta$. Therefore, 
$\dot\epsilon_1\ge \frac{1}{4}(1-\epsilon_1^2)(\epsilon_1-2\delta_1-2\eta) \ge \frac 1 2 (1-\epsilon_1^2)\eta\ge 0$. 

Next, while $\epsilon_1\ge 2\eta$ we have that
$\dot\delta_1\ge \frac{1}{4}(1-\delta_1^2)(2\epsilon_1+\delta_1-2\eta)\ge \frac{1}{2}(1-\delta_1^2)\eta > 0$.
We will shortly argue that $\epsilon_1 > 2\eta$ for the duration of the phase. Therefore, there
is a lower bound on the rate of increase of $\delta_1$, so consequently the target value 
$\min\{(1+2\eta)/2,\epsilon_1\}$ is reached in finite time; moreover, at the end of the phase, 
$\epsilon_1$ is positive and bounded away from $1$. 

To see that $\epsilon_1 > 2\eta$ for the duration of the phase, let $t$ be the earliest time after the start of 
the phase where $\epsilon_1(t)\le 2\eta$ (so, in fact, $\epsilon_1(t) = 2\eta$). Then, there must have been 
a time $t' < t$ in the phase at which $\delta_1(t') > 2\eta$, because otherwise $\epsilon_1$ does not decrease
and its initial value is greater than $2\eta$. Assume for contradiction that the phase does not end before
time $t$. As $\epsilon_1\ge 2\eta$ during the entire interval $[t',t]$, $\delta_1$ must increase during this
entire interval. We get that $\delta_1(t) > 2\eta\ge \epsilon_1(t)$, in contradiction to the condition that in
the phase $\delta_1\le\epsilon_1$. Thus, the phase must end before time~$t$.

\par\medskip\noindent
Phase III: $ (1+2\eta)/2 \leq \delta_1 \leq \epsilon_1$. 
Here $\epsilon_1$ is decreasing:
 $$
  \dot\epsilon_1 = -\frac{1}{4}(1-\epsilon_1^2)(\delta_1+\delta_2-\epsilon_2) 
\le -\frac{1}{4}(1-\epsilon_1^2)(2\delta_1-\epsilon_1-\eta) 
\le -\frac{1}{4}(1-\epsilon_1^2)(1+2\eta-1-\eta) 
< 0.$$

On the other hand, $\delta_1$ is still increasing:
\begin{align}
   \dot\delta_1 &= \frac{1}{4}(1-\delta_1^2)
    (\epsilon_1+\epsilon_2+\delta_2) 
 \geq \frac{1}{4}(1-\delta_1^2)(2\epsilon_1+\delta_1-2\eta) 
 \geq  \frac{1}{4}(1-\delta_1^2)(\frac32 + \eta) > 0.
 \nonumber 
\end{align}
It follows that Phase~III continues until the desired event $\delta_1=\epsilon_1$, and the uniform lower bounds
on the above derivatives again ensure that this happens in finite time.  
This completes the proof of Theorem~\ref{thm:strongfail}.  \hfill\qed

\begin{remark}
It is not hard to check that the analysis of this section applies equally to the same game 
with the value $z=\frac{1}{2}$ replaced by any $z\in(0,1)$.
\end{remark}

\section{Proof of Theorem~\ref{thm:main}} \label{sec:thm2proof}
We recall the form of the replicator dynamics given in Equations~\eqref{eq:replicator4pprelims} and~\eqref{eq:replicator4qprelims}, 
which we restate here for convenience:
\begin{eqnarray}
\dot p_{i,b}  &=&   p_{i,b}(1 -  p_{i,b})\cdot \sum_{\substack{x\in\{0,1\}^n \\ x_i=b}} \sum_{y\in\{0,1\}^m} K_{A,i}(x,y,p,q) \bigl( u(x,y) - u(\bar x^i,y)\bigr) \label{eq:replicator4p}\,;\\
\dot q_{j,b}  &=&  -q_{j,b}(1 -  q_{j,b})\cdot \sum_{x\in\{0,1\}^n}\sum_{\substack{y\in\{0,1\}^m \\ y_j=b}}  K_{B,j}(x,y,p,q) \bigl(u(x,y)-u(x,\bar y^j)\bigr) \label{eq:replicator4q}\,,
\end{eqnarray}
where
\begin{eqnarray*}
K_{A,i}(x,y,p,q) &=& \prod_{i'\ne i} p_{i',x_{i'}} \prod_j q_{j,y_j}\,; \\
K_{B,j}(x,y,p,q) &=& \prod_{i} p_{i,x_i} \prod_{j'\ne j} q_{j',x_{j'}}\,.
\end{eqnarray*}
Throughout this section, we also let $U := \max_{x\in\{0,1\}^n,y\in\{0,1\}^m} |u(x,y)|$ denote the maximum (in absolute value) of any utility
value in the game.
For $\gamma\in \left(0,\frac 1 2\right)$,
we say that a state $(p,q)$ is {\it $\gamma$-near\/} the corner $(x,y)$ iff for all $i$,
$p_{i,x_i} > 1 - \gamma$ and for all $j$, $q_{j,y_j} > 1 - \gamma$. If
$(p,q)$ is not $\gamma$-near $(x,y)$, we say that it is {\it $\gamma$-far
from\/} $(x,y)$. Notice that, as $\gamma < \frac 1 2$, $(p,q)$ cannot be 
$\gamma$-near more than one corner.  We begin with a useful claim and a useful
lemma about the behavior of the dynamics near corners, both of which we will use
repeatedly in the proof.
\begin{claim}\label{cl:speed near corner}
Let $(x,y)\in\{0,1\}^{n+m}$ be a corner, and let $\alpha > 0$. There exists 
$\hat{\gamma}\in \left(0,\frac 1 2\right)$ and constants $C_1 > C_0 > 0$ such 
that the following holds. For every $\gamma\le\hat{\gamma}$, if $(p,q)$ is 
$\gamma$-near $(x,y)$, then for every $i\in [n]$ it holds that:
\begin{enumerate}
\item if $u(x,y) - u(\bar x^i,y) \ge\alpha$ then $\dot p_{i,x_i} > C_1\cdot (1-p_{i,x_i}) = C_1 \cdot p_{i,\overline{x_i}}\,$\,;
\item[] if $u(x,y) - u(x,\bar y^j) \le -\alpha$ then $\dot q_{j,y_j} > C_1\cdot (1-q_{j,y_j}) = C_1 \cdot q_{j,\overline{y_j}}\,$; 
\item if $u(x,y) - u(\bar x^i,y) \le-\alpha$ then $\dot p_{i,x_i} < -C_1\cdot (1-p_{i,x_i}) = -C_1 \cdot p_{i,\overline{x_i}}\,$\,;
\item[] if $u(x,y) - u(x,\bar y^j) \ge \alpha$ then $\dot q_{j,y_j} < -C_1\cdot (1-q_{j,y_j}) = -C_1 \cdot q_{j,\overline{y_j}}$\,;
\item if $u(x,y) - u(\bar x^i,y) = 0$ then $|\dot p_{i,x_i}| < C_0\cdot (1-p_{i,x_i}) = C_0 \cdot p_{i,\overline{x_i}}\,$\,;
\item[] if $u(x,y) - u(x,\bar y^j) = 0$ then $|\dot p_{i,x_i}| < C_0\cdot (1-p_{i,x_i}) = C_0 \cdot p_{i,\overline{x_i}}\,$\,.
\end{enumerate}
\end{claim}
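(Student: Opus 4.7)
The plan is to expand the sum in the replicator equation around the dominant corner term and treat the remainder as a small perturbation. Starting from
$$\dot p_{i,x_i} = p_{i,x_i}(1-p_{i,x_i})\sum_{\substack{x':\, x'_i = x_i}}\sum_{y'} K_{A,i}(x',y',p,q)\bigl(u(x',y') - u(\bar{x'}^i,y')\bigr),$$
the key structural observation is that when $(p,q)$ is $\gamma$-near $(x,y)$, the single term at $(x',y')=(x,y)$ has weight $K_{A,i}(x,y,p,q)\ge(1-\gamma)^{n+m-1}$, which tends to $1$ as $\gamma\to 0$. Since $\sum_{x':x'_i=x_i}\sum_{y'} K_{A,i}(x',y',p,q) = 1$, the total weight of the remaining terms is at most $1-(1-\gamma)^{n+m-1}\le (n+m-1)\gamma$ by Bernoulli's inequality. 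Combined with the uniform bound $|u(x',y')-u(\bar{x'}^i,y')|\le 2U$, the contribution of the non-dominant terms to the bracketed sum is bounded in absolute value by $2U(n+m-1)\gamma$.

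With this estimate in hand, each of the six cases follows by choosing $\hat\gamma$ small enough. For case~(1), when $u(x,y)-u(\bar x^i,y)\ge\alpha$, the bracketed sum is at least $\alpha(1-\gamma)^{n+m-1} - 2U(n+m-1)\gamma$; taking $\hat\gamma$ small enough that this exceeds $\alpha/2$, and using $p_{i,x_i}\ge 1-\gamma\ge 1/2$, yields $\dot p_{i,x_i}\ge (\alpha/4)(1-p_{i,x_i})$, so I set $C_1 := \alpha/4$. Case~(3) is the sign-flipped twin of~(1). For case~(5), where the corner payoff difference is exactly~$0$, only the remainder survives, giving $|\dot p_{i,x_i}|\le 2U(n+m-1)\gamma\,(1-p_{i,x_i})$; I fix any $C_0\in(0,C_1)$ and shrink $\hat\gamma$ further to ensure $2U(n+m-1)\hat\gamma<C_0$. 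The cases for $\dot q_{j,y_j}$ follow from the exact same expansion applied to~\eqref{eq:replicator4q}; the overall minus sign in the $q$-equation is precisely what turns case~(2) ($u(x,y)-u(x,\bar y^j)\le -\alpha$) into a positive lower bound on $\dot q_{j,y_j}$, and analogously for cases~(4) and~(6). Because there are only finitely many indices $i\in[n]$, $j\in[m]$, and six cases each, I take $\hat\gamma$ to be the minimum of all the thresholds produced above, yielding a single $\hat\gamma$ and a single pair $C_1>C_0>0$ that work uniformly.

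The argument is essentially routine perturbation theory, so I do not anticipate a serious obstacle. The only points requiring a little care are (i) the Bernoulli-type bound $1-(1-\gamma)^{n+m-1}\le (n+m-1)\gamma$ that controls the total ``escape mass'' away from the corner coordinate, and (ii) the bookkeeping that insists $C_0$ be strictly less than $C_1$, which is what allows the $O(\gamma)$ slack from the remainder to be absorbed while still separating the nonzero-payoff-gap cases from the zero-gap case.
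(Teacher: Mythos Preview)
Your proof is correct and follows essentially the same approach as the paper: isolate the dominant term $K_{A,i}(x,y,p,q)\bigl(u(x,y)-u(\bar x^i,y)\bigr)$, bound the remaining terms as a small perturbation, and choose $\hat\gamma$ small enough. The only minor difference is bookkeeping on the remainder---you exploit $\sum K_{A,i}=1$ together with Bernoulli to get the polynomial bound $2U(n+m-1)\gamma$, whereas the paper bounds each non-corner $K_{A,i}$ individually by~$\hat\gamma$ and sums over $2^{m+n}$ terms to get $2^{m+n}U\hat\gamma$; your estimate is sharper but both suffice.
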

\par\noindent
{\bf Note:} {\it We remind the reader that $\overline{x_i} = 1-x_i$ denotes the flipped value of the $i$th coordinate of~$x$,
whereas $\bar{x}^i$, as defined after~\eqref{eq:replicator4p2prelims}, is the vector $x\in\{0,1\}^n$ with its $i$th coordinate flipped.}

\begin{proof}
Notice that as $(p,q)$ is $\gamma$-near $(x,y)$, and hence $\hat{\gamma}$-near
$(x,y)$, then $K_{A,i}(x,y,p,q) > (1-\hat{\gamma})^{m+n}$ for all~$i$, and for all
$(x',y')\ne (x,y)$ (except for the pair $(\bar x^i,y)$), 
$K_{A,i}(x',y',p,q)\le\hat{\gamma}$ for all~$i$. Therefore, by
Equation~\eqref{eq:replicator4p}, for $i$ such that $u(x,y) - u(\bar x^i,y) \ge\alpha$
we get that
\begin{equation}\label{eq:temp1}
\dot p_{i,x_i} > (1-\hat{\gamma})\cdot (1 -  p_{i,x_i})\cdot \left((1-\hat{\gamma})^{m+n}\alpha - 2^{m+n}U\hat{\gamma}\right).
\end{equation}
By an identical argument, for $i$ such that $u(x,y) - u(\bar x^i,y) \le -\alpha$,
\eqref{eq:temp1}~again holds with the inequality reversed and the right-hand side negated.
On the other hand, for $i$ such that $u(x,y) - u(\bar x^i,y) = 0$, we get that
$$
|\dot p_{i,x_i}| \le (1 -  p_{i,x_i})\cdot 2^{m+n}U\hat{\gamma}.
$$
For a sufficiently small $\hat{\gamma} = \hat{\gamma}(m,n,U,\alpha) > 0$, there are
constants $C_1 > C_0 > 0$ such that
$$
(1-\hat{\gamma})\cdot \left((1-\hat{\gamma})^{m+n}\alpha - 2^{m+n}U\hat{\gamma}\right) \ge C_1 > C_0 >
2^{m+n}U\hat{\gamma}.
$$
This completes the proof for $i\in[n]$.  The proof for $j\in[m]$ is analogous.
\end{proof}

\begin{lemma}\label{lm:min far duration}
If at time $t$, $(p(t),q(t))$ is $\gamma$-far from every corner, then during the
entire interval $\left[t,t+\frac{1}{4U}\right]$, the trajectory of the Red Queen dynamics
stays $\frac{\gamma}{2}$-far from every corner.
\end{lemma}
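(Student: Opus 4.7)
The plan is to exploit the fact that the dynamics has a built-in multiplicative damping factor $p_{i,b}(1-p_{i,b})$ (resp.\ $q_{j,b}(1-q_{j,b})$), so that approach to the boundary of the cube is only exponential, not linear, in time. Concretely, I would first observe that for each fixed $i$ and $b$,
\[
   \sum_{\substack{x\in\{0,1\}^n \\ x_i=b}}\sum_{y\in\{0,1\}^m} K_{A,i}(x,y,p,q)
      \;=\; \Bigl(\sum_{x_{-i}}\prod_{i'\ne i} p_{i',x_{i'}}\Bigr)\Bigl(\sum_{y}\prod_{j} q_{j,y_j}\Bigr) \;=\; 1,
\]
and analogously for $K_{B,j}$. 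Combining this with $|u(x,y)-u(\bar x^i,y)|\le 2U$ and the forms \eqref{eq:replicator4p}--\eqref{eq:replicator4q} yields the pointwise estimates
\[
   |\dot p_{i,b}(s)| \;\le\; 2U\,p_{i,b}(s)\bigl(1-p_{i,b}(s)\bigr),\qquad
   |\dot q_{j,b}(s)| \;\le\; 2U\,q_{j,b}(s)\bigl(1-q_{j,b}(s)\bigr).
\]

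Next I would rewrite these as bounds on logarithmic derivatives: $|(\log(1-p_{i,b}))^{\boldsymbol\cdot}| = |\dot p_{i,b}|/(1-p_{i,b}) \le 2U\,p_{i,b} \le 2U$, and similarly for $1-q_{j,b}$. Exponentiating (Gr\"onwall) gives, for every $s\ge t$,
\[
   1-p_{i,b}(s) \;\ge\; \bigl(1-p_{i,b}(t)\bigr)\,e^{-2U(s-t)},\qquad
   1-q_{j,b}(s) \;\ge\; \bigl(1-q_{j,b}(t)\bigr)\,e^{-2U(s-t)}.
\]

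Now fix any corner $(x,y)\in\{0,1\}^{n+m}$. Since $(p(t),q(t))$ is $\gamma$-far from $(x,y)$, there is a \emph{witness} coordinate: either some $i$ with $1-p_{i,x_i}(t)\ge\gamma$, or some $j$ with $1-q_{j,y_j}(t)\ge\gamma$. The displayed inequality applied to this witness shows that at any $s\in[t,t+\tfrac{1}{4U}]$ we have $2U(s-t)\le\tfrac12$, hence
\[
   1-p_{i,x_i}(s)\;\ge\;\gamma\,e^{-1/2}\;>\;\tfrac{\gamma}{2}
   \qquad\text{(or the analogous inequality for $q_{j,y_j}$),}
\]
using $e^{-1/2}>1/2$. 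Thus $(p(s),q(s))$ is $\tfrac{\gamma}{2}$-far from $(x,y)$ throughout the interval. Since the argument uses a witness \emph{for that specific corner}, and the class of corners is finite, repeating the argument for each corner completes the proof.

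There is no real obstacle; the only thing one has to notice is that the useful bound is the multiplicative one $|\dot p_{i,b}|\le 2U\,p_{i,b}(1-p_{i,b})$ rather than the naive additive bound $|\dot p_{i,b}|\le U/2$. The latter would only yield a time window of length $\Theta(\gamma/U)$, which shrinks with $\gamma$, whereas the multiplicative bound gives the $\gamma$-independent window $1/(4U)$ claimed in the lemma.
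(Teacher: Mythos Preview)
Your proof is correct and follows essentially the same approach as the paper: both arguments hinge on the multiplicative bound $|\dot p_{i,b}|\le 2U\,p_{i,b}(1-p_{i,b})$ applied to a witness coordinate for each corner. The only cosmetic difference is that you integrate the logarithmic derivative (Gr\"onwall) to get $1-p_{i,x_i}(s)\ge\gamma e^{-1/2}>\gamma/2$, whereas the paper uses a linear crossing-time estimate on the interval where $p_{1,x_1}\in(1-\gamma,1-\gamma/2)$ to reach the same conclusion.
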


\begin{proof}
Let $(x,y)$ be any corner. From the hypothesis of  the lemma, we know that $(p(t),q(t))$ 
is $\gamma$-far from~$(x,y)$.
Without loss of generality, we may thus assume that $p_{1,x_1}(t)\le 1-\gamma$.
In order for the dynamics to later reach a point that is $\frac\gamma 2$-near $(x,y)$,
there must exist a time interval $[t',t'+\tau]$ with $t\le t'$ such that 
$p_{1,x_1}(t') = 1-\gamma$,  $p_{1,x_1}(t'+\tau) = 1-\frac \gamma 2$,  and for all $s\in (t',t'+\tau)$,
$p_{1,x_1}(s)\in (1-\gamma,1-\frac{\gamma}{2})$.

Consider the first such interval. Since all utility values lie in $[-U,U]$, and 
$p_{1,x_1}(s) \ge 1-\gamma$ throughout the interval, we have from
Equation~\eqref{eq:replicator4p} that $\dot{p}_{1,x_1}\le 2U\gamma(1-\gamma)$
over the interval.  Therefore, it must be the case that $$
   1-\frac\gamma 2 = p_{1,x_1}(t'+\tau)\le (1-\gamma) + 2U\gamma(1-\gamma)\tau. $$
Hence we conclude that 
$\tau\ge\frac{1}{4U(1-\gamma)}\ge\frac{1}{4U}$.
As this argument can be made for every corner, the lemma follows.
\end{proof}

We turn now to the proof of Theorem~\ref{thm:main}, starting with case~{\bf (i)}. 
The condition that there exists no (weak) pure Nash
equilibrium is equivalent to the following assertion. For every corner $(x,y)$ of
the feasible region (namely, $x\in\{0,1\}^n$ and $y\in\{0,1\}^m$), {\em at least one} of
the following two conditions holds for some $\alpha>0$:
\begin{enumerate}
\item there exists $i$ for which $u(x,y) - u(\bar{x}^i,y) \le -\alpha$;
\item there exists $j$ for which $u(x,\bar{y}^j) - u(x,y) \le -\alpha$. 
\end{enumerate} 
The union of these conditions expresses the fact that at least one player has an incentive to 
defect from~$(x,y)$.
Therefore, if $(p,q)$ is $\gamma$-near a corner $(x,y)$, for $\gamma\le\hat{\gamma}$
as in Claim~\ref{cl:speed near corner}, then case~2 of that claim ensures that, for
some constant $C>0$, 
\begin{equation}\label{eq:escape}
\bigl(\exists i:\, \dot p_{i,x_i} < -C\cdot (1-p_{i,x_i})\bigr)\vee 
\bigl(\exists j:\, \dot q_{j,y_j} < -C\cdot (1-q_{j,y_j})\bigr).
\end{equation}

Now, we divide the timeline $[0,\infty)$ into {\it epochs\/} $[t_k,t_{k+1})$, for $k=0,1,2,\dots$,
where $t_0 = 0$ and the other~$t_k$ are defined inductively as follows. If, at time $t_k$,
$(p(t_k),q(t_k))$ is $\gamma$-far from every corner, then we set $t_{k+1} = t_k+\frac{1}{4U}$. 
Otherwise, if $(p(t_k),q(t_k))$ is $\gamma$-near a corner $(x,y)$, then we set $t_{k+1}$ to be
the minimum $t > t_{k}$ such that $(p(t),q(t))$ is $\gamma$-far from~$(x,y)$.

The proof rests on the following claim.
\begin{claim}\label{cl:max near duration}
In the above scenario,
if $(p(t),q(t))$ is $\gamma$-near a corner, then there exists some finite $t'>t$
such that $(p(t'),q(t'))$ is $\gamma$-far from every corner.
\end{claim}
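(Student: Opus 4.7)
The plan is to extract a strict exponential escape velocity from the unique corner $(x,y)$ that $(p(t),q(t))$ is $\gamma$-near at time $t$, and then verify that at the exit moment the trajectory is $\gamma$-far from \emph{every} corner, not merely from $(x,y)$. I would assume without loss of generality, by possibly shrinking the global $\gamma$ used in defining the epochs, that $\gamma \le \hat\gamma$ in Claim~\ref{cl:speed near corner} for all corners simultaneously (there are only finitely many of them).

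First I would specialise the no-pure-Nash hypothesis to $(x,y)$: there is $\alpha > 0$ together with either some $i\in[n]$ satisfying $u(x,y) - u(\bar x^i,y) \le -\alpha$, or some $j\in[m]$ satisfying $u(x,y) - u(x,\bar y^j) \ge \alpha$. In the first case, item~2 of Claim~\ref{cl:speed near corner} supplies a constant $C_1 > 0$ such that, throughout any subinterval on which the trajectory remains $\gamma$-near $(x,y)$, one has $\dot p_{i,x_i}(s) < -C_1\bigl(1-p_{i,x_i}(s)\bigr)$; the second case is symmetric with $q_{j,y_j}$ in place of $p_{i,x_i}$. Setting $f(s) := 1-p_{i,x_i}(s)$, this reads $\dot f > C_1 f$. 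Interior preservation (noted after equation~\eqref{eq:replicator4q2prelims}) gives $f(t) > 0$, so Gronwall on the maximal $\gamma$-near subinterval $[t,t^*)$ yields $f(s) \ge f(t) e^{C_1(s-t)}$; since $f \le 1$ throughout, this forces $t^* \le t + C_1^{-1}\log(1/f(t)) < \infty$.

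To conclude, I would verify that $(p(t^*), q(t^*))$ is $\gamma$-far from every corner. By continuity of the trajectory and the exit criterion, at time $t^*$ some coordinate has just reached its boundary threshold (say $p_{k,x_k}(t^*) = 1-\gamma$; the $q$-case is analogous), while every other coordinate still satisfies the strict inequality required by $\gamma$-nearness to $(x,y)$. If $(p(t^*),q(t^*))$ were $\gamma$-near some other corner $(x',y')\ne(x,y)$, then $(x',y')$ would differ from $(x,y)$ in some coordinate $k'$, and being $\gamma$-near $(x',y')$ would demand $p_{k',x_{k'}}(t^*) < \gamma$; but $p_{k',x_{k'}}(t^*) \ge 1-\gamma > \gamma$ (whether $k'=k$, giving equality, or $k'\ne k$, giving strict inequality via the still-active near-$(x,y)$ constraint), a contradiction. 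Thus $t' := t^*$ works.

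The only delicate point I foresee is the small circularity in the Gronwall step---the differential inequality $\dot f > C_1 f$ is only valid \emph{inside} the $\gamma$-neighborhood while we wish to use it to \emph{leave} that neighborhood---but running the argument on the maximal $\gamma$-near interval and invoking the uniform bound $f\le 1$ resolves this cleanly, so I expect no substantive obstacle.
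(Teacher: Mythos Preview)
Your proof is correct and takes essentially the same approach as the paper: exponential growth of the escaping coordinate via the differential inequality supplied by Claim~\ref{cl:speed near corner} (case~2), followed by the observation that at the exit time all coordinates still satisfy $p_{i,x_i}\ge 1-\gamma>\gamma$ (and similarly for $q$), which precludes $\gamma$-nearness to any other corner since $\gamma<\tfrac12$. One minor wording point: at $t^*$ it is possible for several coordinates to hit the threshold $1-\gamma$ simultaneously, so ``every other coordinate still satisfies the strict inequality'' need not hold literally, but your argument only needs $p_{k',x_{k'}}(t^*)\ge 1-\gamma$, which follows by continuity, so nothing is lost.
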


\begin{proof}
Suppose that $(p(t),q(t))$ is $\gamma$-near the corner $(x,y)$.
By~\eqref{eq:escape}, for $t'\ge t$, as long as $(p(t'),q(t'))$ is $\gamma$-near $(x,y)$,
there is at least one player that moves away from that corner.  Without loss of generality
we may assume this is player $i=1$ on team~A.
Thus, $\dot{p}_{1,x_1}(t') < -C\cdot (1-p_{1,x_1})$, or equivalently, 
$\dot{p}_{1,\overline{x_1}}(t') > C\cdot p_{1,\overline{x_1}}$ for all such~$t'$.

Therefore, if during the interval
$[t,t')$ the dynamics stays $\gamma$-near the corner 
$(x,y)$, then $p_{1,\overline{x_1}}(t') > p_{1,\overline{x_1}}(t) e^{C (t'-t)}$. 
However, for
$t'\ge t + \frac{\ln(\gamma/p_{1,\overline{x_1}}(t))}{C}$ we get that 
$p_{1,\overline{x_1}}(t') > \gamma$, so $(p(t'),q(t'))$ must be $\gamma$-far 
from the corner $(x,y)$ at time $t + \frac{\ln(\gamma/p_{1,\overline{x_1}}(t))}{C}$ 
or earlier.  Now let~$t'$ be the earliest time in $[t, t + \frac{\ln(\gamma/p_{1,\overline{x_1}}(t))}{C})$
at which either $p_{1,x_1}(t') = 1-\gamma$, or some other coordinate (in either team~A
or team~B) satisfies the analogous equality.  Since $\gamma < \frac 1 2$, it must be the case 
that at this time~$t'$, $(p(t'),q(t'))$ is $\gamma$-far from every corner.
\end{proof}

We are now ready to complete the proof of case {\bf (i)}. Claim~\ref{cl:max near duration}
implies that the number of epochs that begin at a state that is $\gamma$-far from every
corner is unbounded, because if an epoch does not satisfy this condition then it has
finite length and the following epoch will satisfy this condition. By Lemma~\ref{lm:min far duration},
in every epoch that begins at a state that is $\gamma$-far from every corner, the trajectory
remains $\frac{\gamma}{2}$-far from every corner for the duration of the epoch, which is $\frac{1}{4U}$.
But then, for such an epoch $\left[t,t+\frac{1}{4U}\right)$, for every 
$t'\in \left[t,t+\frac{1}{4U}\right)$ we
have $H(p(t')) + H(q(t'))\ge \frac \gamma 2 \ln\frac{2}{\gamma}$. Therefore,
taking $\varepsilon = \frac \gamma 4 \ln\frac{2}{\gamma}$, we get
$$
\int_t^{t+\frac{1}{4U}} \max\left\{0,H(p(t')) + H(q(t')) - \varepsilon\right\} dt'\ge 
    \frac{\gamma}{16U} \ln\frac{2}{\gamma}.
$$
Together with the fact that there are infinitely many disjoint epochs of this sort,
Property~\ref{propMamaBear} holds, as claimed.  This completes the proof of case~{\bf (i)}.

Next we consider case {\bf (ii)}. If the conditions of case {\bf (i)} are satisfied, then its stronger conclusion
is implied. Otherwise, we have that no corner is a strict pure Nash equilibrium, but there 
exists at least one corner $(x,y)$ with the following three properties:
\begin{enumerate}
\item for all $i$, $u(x,y) - u(\bar x^i,y) \ge 0$;
\item for all $j$, $u(x,\bar y^j) - u(x,y) \ge 0$;
\item\label{it:indifferent} At least one of the above inequalities holds with equality.
\end{enumerate}
Such a corner is a \emph{weak} Nash equilibrium, as no player has an
advantage defecting, but at least one player is indifferent to defecting (as expressed
by condition~3). The issue
in this case is that the dynamics might become ``trapped'' by
and approach such a weak equilibrium.  However, we will show 
that even if this is the case, the convergence is so slow that the cumulative 
entropy is unbounded.

Consider a weak pure Nash equilibrium corner $(x,y)$. Let $I_0,J_0$
be the sets of $i,j$, respectively, for which the indifference property~\ref{it:indifferent} above holds at $(x,y)$. 
We know that $I_0\cup J_0\ne\emptyset$.
Now cases~2 and~3 of Claim~\ref{cl:speed near corner} imply
that there exists $\gamma\in \left(0, \frac 1 2\right)$ and constants $C_1 > C_0 > 0$
such that if $(p,q)$ is $\gamma$-near $(x,y)$, then the following inequalities hold:
\begin{equation}\label{eq:speed}
\begin{array}{ll}
\dot p_{i,\overline{x_i}} < -C_1\cdot p_{i,\overline{x_i}}, & \forall i\not\in I_0\\
\dot p_{i,\overline{x_i}} > -C_0\cdot p_{i,\overline{x_i}}, & \forall i\in I_0\\
\dot q_{j,\overline{y_j}} < -C_1\cdot q_{j,\overline{y_j}}, & \forall j\not\in J_0\\
\dot q_{j,\overline{y_j}} > -C_0\cdot q_{j,\overline{y_j}}, & \forall j\in J_0.
\end{array}
\end{equation}

If $(x,y)$ is not a weak pure Nash equilibrium, then Equation~\eqref{eq:escape} holds, as
in case {\bf (i}). 
Note that as there are finitely many corners, we can choose 
$\gamma$ uniformly for all corners, so that $C_0,C_1$ are also chosen uniformly 
for all weak pure Nash corners, and $C$ for all other corners. From now on, we set $\gamma,C_0,C_1,C$ to
be these uniform values.

The rest of the proof follows along similar lines to that of case~{\bf (i)}, but with
some added technicalities to handle the weak pure Nash corners.  We again
divide the timeline into epochs $[t_k,t_{k+1})$, starting with $t_0 = 0$, as 
follows. If $(p(t_k),q(t_k))$ is $\gamma$-far from all corners, we set 
$t_{k+1} = t_k + \frac 1 {4U}$. If $(p(t_k),q(t_k))$ is $\gamma$-near a
non-Nash corner, we set $t_{k+1}$ to be the minimum $t > t_k$ at which
the dynamics is $\gamma$-far from that corner. If $(p(t_k),q(t_k))$
is $\gamma$-near a weak Nash corner $(x,y)$, we set 
$t_{k+1} = \min\{t_{k+1}^{(1)},t_{k+1}^{(2)}\}$, where $t_{k+1}^{(1)}$ is
the minimum $t > t_k$ such that the state of the dynamics is $\gamma$-far
from that that corner, and $t_{k+1}^{(2)}$ is defined as follows. Let
\begin{eqnarray*}
\gamma_{\max}(t) & = & \max\{\max_i\{p_{i,\overline{x_i}}(t)\}, \max_j\{q_{j,\overline{y_j}}(t)\}\} \label{eq:max idx}\,; \\
\gamma_0(t) & = & \max\{\max_{i\in I_0}\{p_{i,\overline{x_i}}(t)\}, \max_{j\in J_0}\{q_{j,\overline{y_j}}(t)\}\} \label{eq:max 0-idx}\,.
\end{eqnarray*}
Note that $\gamma_{\max}(t) \ge \gamma_0(t)$ for all~$t$.
If $\gamma_{\max}(t_k) = \gamma_0(t_k)$, then
let $t_{k+1}^{(2)}$ be the minimum $t > t_k$ such that $(p(t),q(t))$ 
is $\frac{\gamma_0(t_k)}{2}$-near $(x,y)$. Otherwise, let $t_{k+1}^{(2)}$ be the 
minimum $t > t_k$ such that $\gamma_{\max}(t) = \gamma_0(t)$.
Notice that $t_{k+1}^{(1)}$, $t_{k+1}^{(2)}$ could be infinite.
\begin{claim}\label{cl:zero}
If $(p(t_k),q(t_k))$ is $\gamma$-near a weak pure Nash corner $(x,y)$ and 
$\gamma_{\max}(t_k) > \gamma_0(t_k)$, then $t_{k+1}$ is finite.
\end{claim}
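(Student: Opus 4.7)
The plan is to exploit the separation of decay rates $C_1 > C_0$ in~\eqref{eq:speed} near the weak pure Nash corner $(x,y)$. The hypothesis $\gamma_{\max}(t_k) > \gamma_0(t_k)$ says that the dominant off-corner coordinate at time $t_k$ is \emph{non-indifferent}, and such coordinates are driven toward $0$ at the fast exponential rate $C_1$, while every indifferent coordinate can shrink at most at the slow rate $C_0$. Since $C_1 > C_0$, within finite time the non-indifferent maximum must drop below the indifferent maximum $\gamma_0(t)$, triggering $t_{k+1}^{(2)}$, unless the trajectory has already escaped the $\gamma$-neighborhood of $(x,y)$ and triggered $t_{k+1}^{(1)}$ first. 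Either way, $t_{k+1}$ is finite.

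Concretely, I would argue by contradiction: assume $t_{k+1} = \infty$, so in particular $t_{k+1}^{(1)} = \infty$ and $(p(t),q(t))$ remains $\gamma$-near $(x,y)$ for all $t \ge t_k$. This is precisely the regime in which the bounds~\eqref{eq:speed} hold. For each $i \notin I_0$, integrating $\dot p_{i,\overline{x_i}} < -C_1 p_{i,\overline{x_i}}$ from the initial value $p_{i,\overline{x_i}}(t_k) \le \gamma$ yields $p_{i,\overline{x_i}}(t) \le \gamma\, e^{-C_1(t-t_k)}$, and analogously $q_{j,\overline{y_j}}(t) \le \gamma\, e^{-C_1(t-t_k)}$ for $j \notin J_0$. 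Hence the maximum taken over non-indifferent coordinates is bounded above by $\gamma\, e^{-C_1(t-t_k)}$.

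On the indifferent side, pick a coordinate in $I_0 \cup J_0$ attaining the maximum $\gamma_0(t_k)$ and integrate the lower bound $\dot p_{i,\overline{x_i}} > -C_0 p_{i,\overline{x_i}}$ (or its $J_0$ counterpart); this gives $\gamma_0(t) \ge \gamma_0(t_k)\, e^{-C_0(t-t_k)}$. Crucially, $\gamma_0(t_k) > 0$, since $I_0 \cup J_0 \ne \emptyset$ at a weak Nash corner and trajectories remain strictly inside $(0,1)^{n+m}$ (as noted after~\eqref{eq:replicator4q2prelims}). Comparing the two envelopes, the inequality
\[ \gamma\, e^{-C_1(t-t_k)} \;\le\; \gamma_0(t_k)\, e^{-C_0(t-t_k)} \]
first holds at the finite time $t^\ast = t_k + \frac{1}{C_1 - C_0} \ln\!\bigl(\gamma/\gamma_0(t_k)\bigr)$. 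At $t^\ast$, the non-indifferent maximum is at most $\gamma_0(t^\ast)$, forcing $\gamma_{\max}(t^\ast) = \gamma_0(t^\ast)$ (since $\gamma_{\max} \ge \gamma_0$ always holds), contradicting $t_{k+1}^{(2)} = \infty$.

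I do not foresee a substantive obstacle: the argument is a one-shot exponential comparison in the spirit of a Gronwall estimate. The only care required is to ensure that the differential inequalities~\eqref{eq:speed} (and the underlying Claim~\ref{cl:speed near corner}) apply throughout $[t_k, t^\ast]$, which is guaranteed precisely by the contrapositive assumption $t_{k+1}^{(1)} = \infty$.
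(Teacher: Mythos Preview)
Your argument is correct and follows essentially the same approach as the paper: assume $t_{k+1}^{(1)}=\infty$ so the bounds~\eqref{eq:speed} apply, then compare the fast exponential decay of the non-indifferent coordinates (rate $C_1$) against the slow lower envelope on $\gamma_0$ (rate $C_0$) to force $\gamma_{\max}=\gamma_0$ in finite time. The only cosmetic difference is that the paper starts the upper envelope at $\gamma_{\max}(t_k)$ and bounds $\gamma_{\max}$ directly, whereas you start at $\gamma$ and bound each non-indifferent coordinate separately; this yields a slightly larger $t^\ast$ but the logic is identical.
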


\begin{proof}
If $t_{k+1}^{(1)} < \infty$ then clearly $t_{k+1}$ is finite.
Otherwise, by Equations~\eqref{eq:speed}, if at time $t$ we have that
$\gamma_{\max}(t) > \gamma_0(t)$, then
$\dot \gamma_{\max}(t) < -C_1 \gamma_{\max}(t)$ and
$\dot \gamma_0(t) > -C_0 \gamma_0(t)$.
So, assume that for every $t'\in [t_k,t]$ it is the case that
$\gamma_{\max}(t') > \gamma_0(t')$. Then
$\gamma_{\max}(t) < \gamma_{\max}(t_k)\cdot e^{-C_1 (t-t_k)}$,
whereas 
$\gamma_0(t) > \gamma_0(t_k)\cdot e^{-C_0 (t-t_k)}$.
However, if we set
$$
t > t_k + \frac{\ln\frac{\gamma_{\max}(t_k)}{\gamma_0(t_k)}}{C_1-C_0},
$$
then we get $\gamma_0(t) > \gamma_{\max}(t)$, which is a contradiction.
\end{proof}

We are now ready to prove case~{\bf (ii)}. Clearly, if $(p(t_k),q(t_k))$ is 
$\gamma$-far from every corner for an infinite subsequence of $k$,
then the analysis of case {\bf (i)} holds and we are done. The only
alternative situation is that the dynamics gets ``trapped" by a weak pure Nash
corner.  In this case, by the proof of Claim~\ref{cl:zero} there is a finite integer~$k_0$
such that, for every $t\ge t_{k_0}$, $\gamma_{\max}(t) = \gamma_0(t)$.

We now identify two cases, according to whether the number of epochs is finite
or infinite.  In the first case, there are only finitely many epochs.  Suppose the last epoch is $[t_{k-1},t_k)$, 
with $k\ge k_0$, and consider the infinite time interval $[t_k,\infty)$.  Since the current epoch fails
to end in finite time, it must be the case that $\gamma_{\max}(t) > \frac{\gamma_{\max}(t_k)}{2}$
for all $t\ge t_k$, and hence also $H(p(t)) + H(q(t))\ge \frac{\gamma_{\max}(t_k)}{2}\ln\frac{2}{\gamma_{\max}(t_k)}$
for all $t\ge t_k$.  This immediately implies that  $\int_{t_k}^\infty \left(H(p(t))+H(q(t)\right) dt =\infty$, as desired.

In the second case, there are infinitely many epochs.  Consider any such epoch $[t_k,t_{k+1})$ with $k\ge k_0$.
We claim that, at any time $t$ in this epoch, for all $i\in I_0$ we have 
\begin{equation}\label{eq:new convergence to weak Nash}
   \dot p_{i,\overline{x_i}}(t) \ge -C_2\gamma_{\max}(t_k)^2,
\end{equation}
for some absolute constant $C_2>0$, with the same bound holding for
$\dot q_{j,\overline{y_j}}(t)$ for all $j\in J_0$.  To see this claim, note
from the dynamics~\eqref{eq:replicator4p} that $$
 \dot p_{i,\overline{x_i}}(t) =  p_{i,\overline{x_i}}(t)p_{i,x_i}(t)\cdot \sum_{\substack{x' \\ x'_i=\overline{x_i}}}\sum_{y'} K_{A,i}(x\myprime,y\myprime,p,q) 
     \bigl( u(x\myprime,y\myprime) - u(\bar{x\myprime}^{\mkern0.5mu i},y\myprime)\bigr).  $$
Now $p_{i,\overline{x_i}}(t)p_{i,x_i}(t) = p_{i,\overline{x_i}}(t)(1-p_{i,\overline{x_i}}(t)) \le \gamma_{\max}(t_k)-\gamma_{\max}(t_k)^2$,
since $p_{i,\overline{x_i}}(t) \le\gamma_{\max}(t_k)$ for all $t\ge t_k$ and the function $z(1-z)$ is
increasing for small positive~$z$.  For the sum over $(x\myprime,y\myprime)$, we observe the
following.  If $(x\myprime,y\myprime) = (x,y)$ then the contribution is $0$, as 
$u(x,y) - u(\bar{x}^i,y) = 0$ (since $i\in I_0$).  Otherwise, $K_{A,i}(x\myprime,y\myprime,p,q)$ 
contains at least one factor of the form $p_{i',\overline{x_{i'}}}$ or $q_{j',\overline{y_{j'}}}$,
which is at most $\gamma_{\max}(t_k)$, so $K_{A,i}(x\myprime,y\myprime,p,q)\le\gamma_{\max}(t_k)$.
Moreover, as always $u(x\myprime,y\myprime) - u(\bar{x\myprime}^{\mkern0.5mu i},y\myprime) \ge -2U$.
Putting all this together, and noting that the number of such terms is less than $2^{n+m}$, we deduce
that~\eqref{eq:new convergence to weak Nash} holds for a suitable constant~$C_2$.  An identical
argument proves the same bound for $\dot q_{j,\overline{y_j}}(t)$ for $j\in J_0$.

The bound in~\eqref{eq:new convergence to weak Nash} implies that $\gamma_{\max}$ cannot decrease
very fast during the epoch.  To see this, recall the definition of $\gamma_{\max}$ and the fact that, for all $t\ge t_{k_0}$,
$\gamma_{\max}$ is achieved by some $i\in I_0$ or $j\in J_0$.  Then by a standard generalized version of 
Rolle's Theorem (see, e.g., \cite{Rolle})
the uniform lower bound~\eqref{eq:new convergence to weak Nash} on the derivatives  
$\dot p_{i,\overline{x_i}}(t)$ and $\dot q_{j,\overline{y_j}}(t)$ for $i\in I_0$ and $j\in J_0$ ensures that
$\gamma_{\max}(t_k+t') \ge \gamma_{\max}(t_k) - t'C_2\gamma_{\max}(t_k)^2$.  Setting
$t'=\frac{1}{2C_2\gamma_{\max}(t_k)}$ implies that $\gamma_{\max}(t_k+\frac{1}{2C_2\gamma_{\max}(t_k)}) \ge \frac{\gamma_{\max}(t_k)}{2}$,
and hence, from the definition of $t_{k+1}^{(2)}$, we see that $t_{k+1} \ge t_k + \frac{1}{2C_2\gamma_{\max}(t_k)}$. 
Moreover, during the entire epoch
$[t_k,t_{k+1})$, we have that $H(p(t)) + H(q(t))\ge \frac{\gamma_{\max}(t_k)}{2}\ln\frac{2}{\gamma_{\max}(t_k)}$.
Hence,
\begin{eqnarray*}
          \int_0^\infty \left(H(p(t))+H(q(t)\right) dt 
& \ge & \sum_{k\ge k_0} \int_{t_k}^{t_{k+1}} \left(H(p(t))+H(q(t)\right) dt \\
& \ge & \sum_{k\ge k_0} \frac{1}{2C_2\gamma_{\max}(t_k)}\cdot \frac{\gamma_{\max}(t_k)}{2}\ln\frac{2}{\gamma_{\max}(t_k)} \\
& \ge & \sum_{k\ge k_0} \frac{1}{4C_2}  =  \infty.
\end{eqnarray*}
This completes the proof of case~{\bf (ii)} of Theorem~\ref{thm:main}. 

It remains to prove case~{\bf (iii)}.  Let $(x,y)$
be a strict pure Nash equilibrium, i.e., $x\in\{0,1\}^n$, $y\in\{0,1\}^m$, and for some $\alpha>0$
we have:
\begin{enumerate}
\item for all $i$, $u(x,y) - u(\bar x^i,y) \ge\alpha$; 
\item for all $j$, $u(x,\bar y^j) - u(x,y) \ge\alpha$.
\end{enumerate}
Thus, if we initialize the dynamics at a point $(p(0),q(0))$ that is $\gamma$-near the corner~$(x,y)$,
where $\gamma\le\hat\gamma$ as specified in Claim~\ref{cl:speed near corner}, then case~1
of that claim ensures that
\begin{equation}\label{eqn:pqlb}
\begin{array}{r@{}l}
 \dot p_{i,x_i} &{} > C\cdot p_{i,\overline{x_i}} , \forall i \,;\\
 \dot q_{j,y_j} &{} > C\cdot q_{j,\overline{y_j}}, \forall j\,,
\end{array}
\end{equation}
for a suitable constant $C>0$.
We can restate~\eqref{eqn:pqlb} as:
\begin{equation}\label{eqn:pqub}
\begin{array}{r@{}l}
 \dot p_{i,\overline{x_i}} &{} < -C\cdot p_{i,\overline{x_i}}, \forall i\,;\\
 \dot q_{j,\overline{y_j}} &{} < -C\cdot q_{j,\overline{y_j}}, \forall j\,.
\end{array}
\end{equation}
Thus we see that the trajectory of the dynamics approaches~$(x,y)$.
It remains to verify that the convergence is sufficiently rapid to prevent the 
cumulative entropy from diverging. This follows from the bounds~\eqref{eqn:pqub},
which imply that, for some absolute constants $c_i, d_j$, 
we have $p_{i,\overline{x_i}}(t) < c_i e^{-C t}$ and $q_{j,\overline{y_j}}(t) < d_j e^{-C t}$. Therefore,
\begin{eqnarray*}
          \int_0^\infty \left(H(p(t))+H(q(t)\right) dt 
& < & 2\sum_{i=1}^n \int_0^{\infty} p_{i,\overline{x_i}}(t)\log\frac{1}{p_{i,\overline{x_i}}(t)} dt +
          2\sum_{j=1}^m \int_0^\infty q_{j,\overline{y_j}}(t)\log\frac{1}{q_{j,\overline{y_j}}(t)} dt \\
& < & 2\sum_{i=1}^n \int_0^{\infty} c_i(C t - \log c_i) e^{-C t} dt +
          2\sum_{j=1}^m \int_0^{\infty} d_j (C t - \log d_j) e^{-C t} dt \\			
& < & \infty,
\end{eqnarray*}
where the first inequality used $\gamma < \frac 1 2$.  This completes the proof
of case~{\bf (iii)}, and therefore of Theorem~\ref{thm:main}.

\section{Discussion} \label{sec:discussion}
We close the paper with some final remarks and open questions.
\begin{enumerate}
\item
The combination of Theorem~\ref{thm:strongfail} and
Theorem~\ref{thm:main} shows that systems can undergo dramatic, and never-ending, oscillations in diversity. There are of course many simplifications in the mathematical model, but the phenomenon has a clear basis in the dynamics, since it illustrates the ability of a a species to emerge from a period in which it was poorly fit for the competitive pressures. It would be interesting to investigate whether any natural ecologies exhibit
this type of behavior.

\item In a finite population there is a lower bound on the genome frequencies actually present in the population. Therefore, if the dynamical model analyzed in this paper approaches a monoculture, then all diversity may vanish. In such a case, the species may be trapped in an inferior pure strategy, perhaps leading to its extinction, due to the absence of diversity. This indicates that the Red Queen
dynamics is a model that may be valid for only a finite time horizon.

\item  According to Fisher's fundamental theorem of natural selection~\cite{Fisher}, at any time the rate
of increase in mean fitness due to natural selection is equal to the genetic variance in fitness. In a situation of co-evolution, however, as we are studying here, Fisher's theorem does not apply, and indeed, the mean fitness of each species may be increasing at some times and decreasing at others, despite the genetic variance in fitness always being nonnegative.

\item 
Theorem~\ref{thm:strongfail} raises the question of whether there are natural conditions (apart from those of~\cite{PilSch}) under which diversity is maintained in the strongest sense, i.e., Property~\ref{propPapaBear} holds. A natural possibility to consider is two-team zero-sum games in which there is no \emph{duality gap} as defined in~\cite{SchulVaz}. The converse is known to be false, namely, there are known to be games in which there is a duality gap and Property~\ref{propPapaBear} holds (see discussion prior to Theorem 2 in~\cite{PilSch}).
         
\item We should clarify the relationship of our results to those of Mehta, Panageas and Piliouras~\cite{MPP}.
In their setting, with time-invariant fitnesses, there must exist a pure Nash equilibrium, and they
prove that, under mild conditions, the system always converges to such an equilibrium.  In our setting, the
time-varying fitnesses typically eliminate pure Nash equilibria, and we prove that, in the absence
of such equilibria, the dynamics maintains diversity (in the sense of Theorem~\ref{thm:main}).  
Of course, if our Red Queen dynamics does have a pure Nash equilibrium, then it may converge to it
(as we demonstrate in Theorem~\ref{thm:main}(iii)).  Thus it is in cases where it is more beneficial to play
mixed strategies that diversity is useful for species-level survival.
\item Keeping in mind the connection between Red Queen dynamics and the MWU algorithm~\cite{CLPV}, our results show some of the dramatic dynamics that can occur when multiple agents are simultaneously implementing MWU.
\end{enumerate}

\end{document}